\title{Agglomerative Clustering of Growing Squares\footnote{The final publication is available at Springer via \url{https://doi.org/10.1007/978-3-319-77404-6_20}. The Netherlands Organisation for Scientific Research (NWO) is supporting B.S. under project no.~639.023.208, K.V. under project no.~639.021.541, and T.C. under project no.~314.99.117. The Danish National Research Foundation is supporting F.S. under grant nr.~DNRF84.}}
\titlerunning{Agglomerative Clustering of Growing Squares} 
\author[1]{Thom~Castermans}
\author[2]{Bettina~Speckmann}
\author[3]{Frank~Staals}
\author[4]{Kevin~Verbeek}
\affil[1]{Department of Computer Science, TU Eindhoven, Eindhoven, Netherlands\\
  \texttt{t.h.a.castermans@tue.nl}}
\affil[2]{Department of Computer Science, TU Eindhoven, Eindhoven, Netherlands\\
  \texttt{b.speckmann@tue.nl}}
\affil[3]{MADALGO, Aarhus University, Aarhus, Denmark\\
  \texttt{f.staals@cs.au.dk}}
\affil[4]{Department of Computer Science, TU Eindhoven, Eindhoven, Netherlands\\
  \texttt{k.a.b.verbeek@tue.nl}}
\authorrunning{T. Castermans, B. Speckmann, F. Staals, and K. Verbeek} 
\subjclass{I.3.5 Computational Geometry and Object Modeling}
\keywords{computational geometry, kinetic data structures, range tree}
  \DeclareFontShape{T1}{lmr}{m}{scit}{<->ssub*lmr/m/scsl}{}%
\DeclareMathOperator{\polylog}{polylog}
\newcommand*{\etal}{et al.\xspace}
\newcommand*{\BBalpha}{\textsc{bb}{\footnotesize[}$\alpha${\footnotesize]}}
\newcommand*{\R}{\mathbb{R}}
\newcommand*{\numLinks}[1]{\textit{numLinks}\,(#1)}
\newcommand{\mkmcal}[1]{\ensuremath{\mathcal{#1}}\xspace}
\newcommand{\X}{\mkmcal{X}}
\theoremstyle{definition} 
\newtheorem{observation}[theorem]{Observation}
\begin{document}

\maketitle

\begin{abstract}
  We study an agglomerative clustering problem motivated by interactive glyphs
  in geo-visualization. Consider a set of disjoint square glyphs on an
  interactive map. When the user zooms out, the glyphs grow in size relative to
  the map, possibly with different speeds. When two glyphs intersect, we wish
  to replace them by a new glyph that captures the information of the
  intersecting glyphs.

  We present a fully dynamic kinetic data structure that maintains a set of $n$
  disjoint growing squares. Our data structure uses
  $O(n (\log n \log\log n)^2)$ space, supports queries in worst case
  $O(\log^3 n)$ time, and updates in $O(\log^7 n)$ amortized time. This leads
  to an $O(n\alpha(n)\log^7 n)$ time algorithm to solve the agglomerative
  clustering problem. This is a significant improvement over the current best
  $O(n^2)$ time algorithms.
\end{abstract}

\section{Introduction}
\label{sec:Introduction}

We study an agglomerative clustering problem motivated by interactive glyphs in geo-visualization. Our specific use case stems from the eHumanities, but similar visualizations are used in a variety of application areas. \emph{GlamMap}~\cite{csvwbb-ggfeh-2016}\footnote{{\tt http://glammap.net/glamdev/maps/1}, best viewed in Chrome.} is a visual analytics tool which allows the user to interactively explore datasets which contain (at least) the following metadata of a book collection: author, title, publisher, year of publication, and location (city) of publisher. Each book is depicted by a square, color-coded by publication year, and placed on a map according to the location of its publisher. Overlapping squares (many books are published in Leipzig, for example) are recursively aggregated into a larger glyph until all glyphs are disjoint (see Fig.~\ref{fig:glammap}). As the user zooms out, the glyphs ``grow'' relative to the map to remain legible. As a result, glyphs start to overlap and need to be merged into larger glyphs to keep the map clear and uncluttered. It is straightforward to compute the resulting agglomerative clustering whenever a data set is loaded and to serve it to the user as needed by the current zoom level. However, GlamMap allows the user to filter by author, title, year of publication, or other applicable meta data. It is impossible to pre-compute the clustering for any conceivable combination of filter values. To allow the user to browse at interactive speeds, we hence need an efficient agglomerative clustering algorithm for growing squares (glyphs). Interesting bibliographic data sets (such as the catalogue of WorldCat, which contains more than 321 million library records at hundreds of thousands of distinct locations) are too large by a significant margin to be clustered fast enough with the current state-of-the-art $O(n^2)$ time algorithms (here $n$ is the number of squares or glyphs).

In this paper we formally analyze the problem and present a fully dynamic data
structure that uses $O(n (\log n \log\log n)^2)$ space, supports updates in $O(\log^7 n)$
amortized time, and queries in $O(\log^3 n)$ time, which allows us to compute
the agglomerative clustering for $n$ glyphs in $O(n\alpha(n)\log^7 n)$ time. Here,
 $\alpha$ is the extremely slowly growing inverse Ackermann function.  To
the best of our knowledge, this is the first fully dynamic clustering algorithm
which beats the classic $O(n^2)$ time bound.

\begin{figure}[t]
  \includegraphics[width=0.3\linewidth]{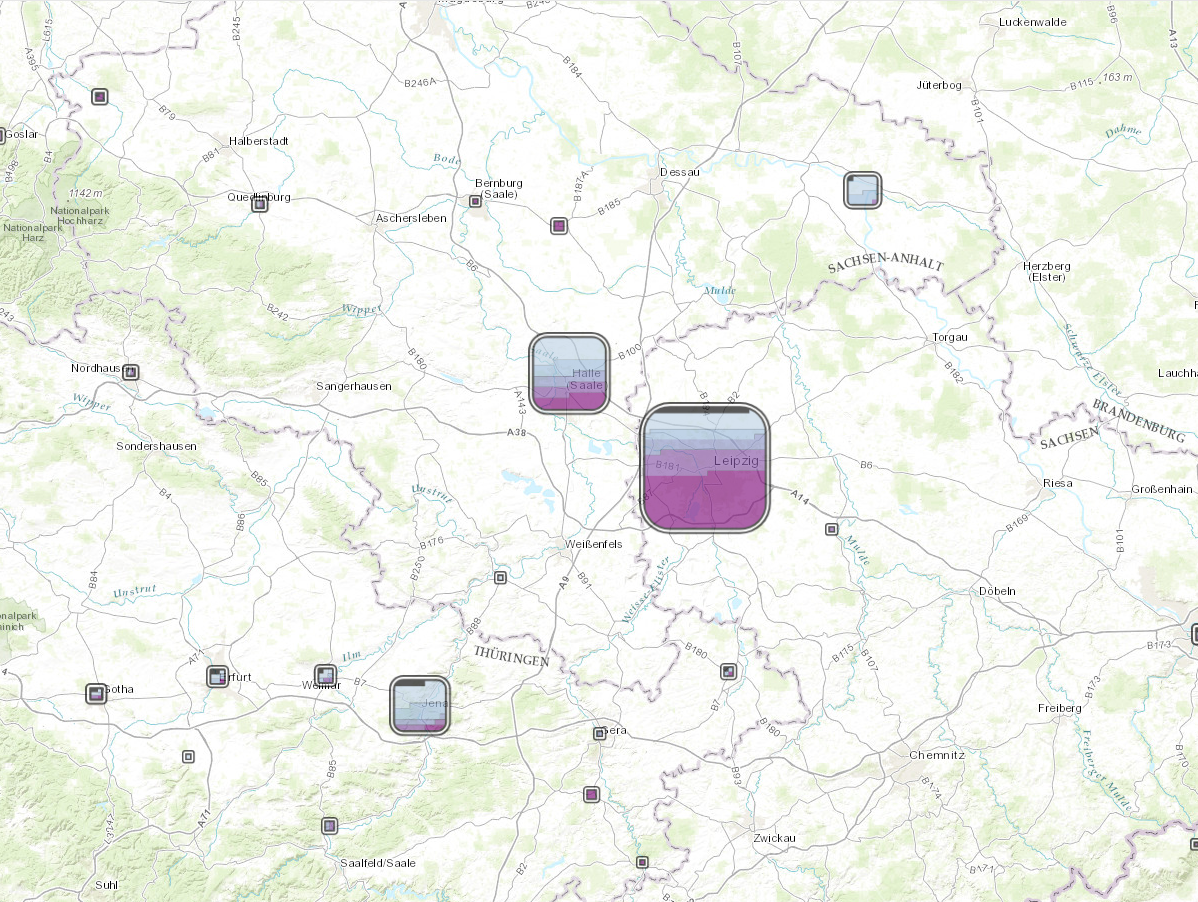}\hfill
  \includegraphics[width=0.3\linewidth]{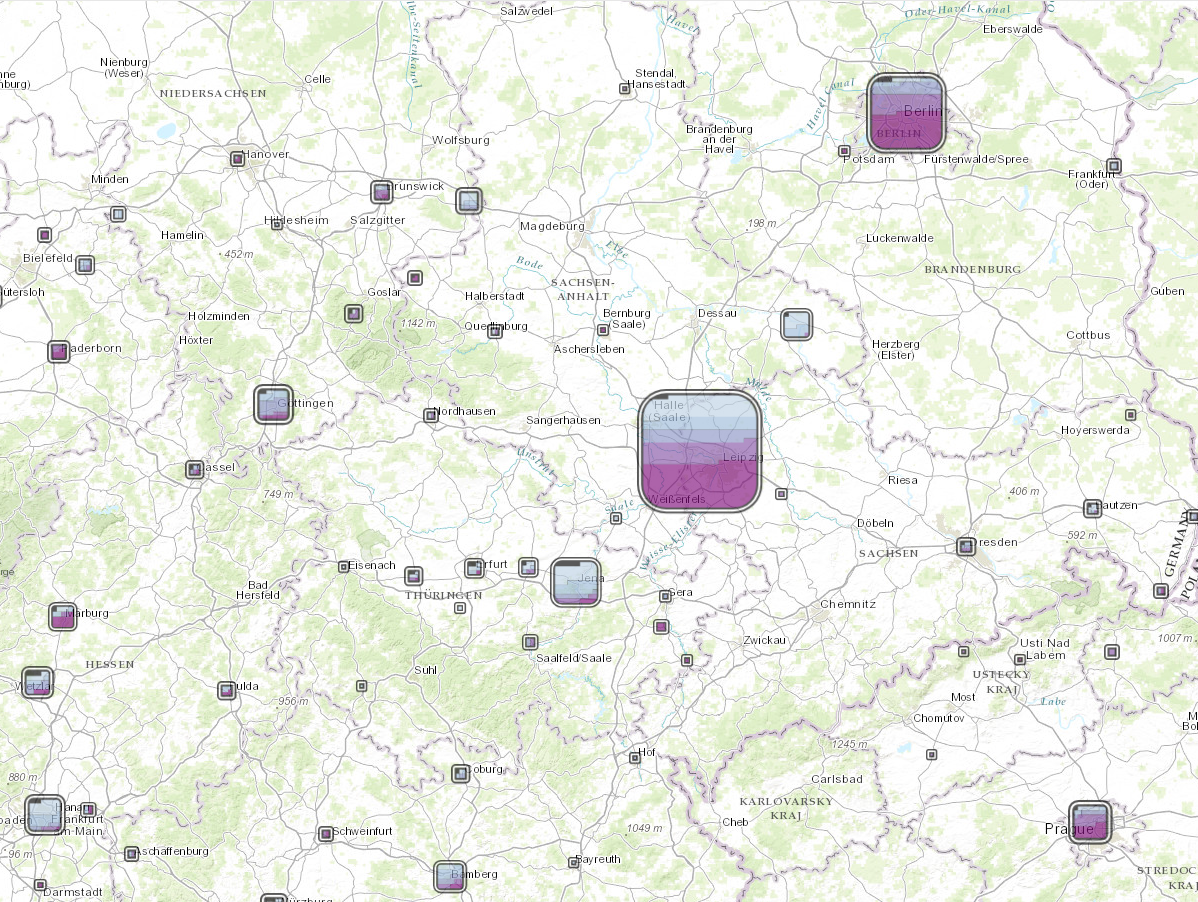}\hfill
  \includegraphics[width=0.3\linewidth]{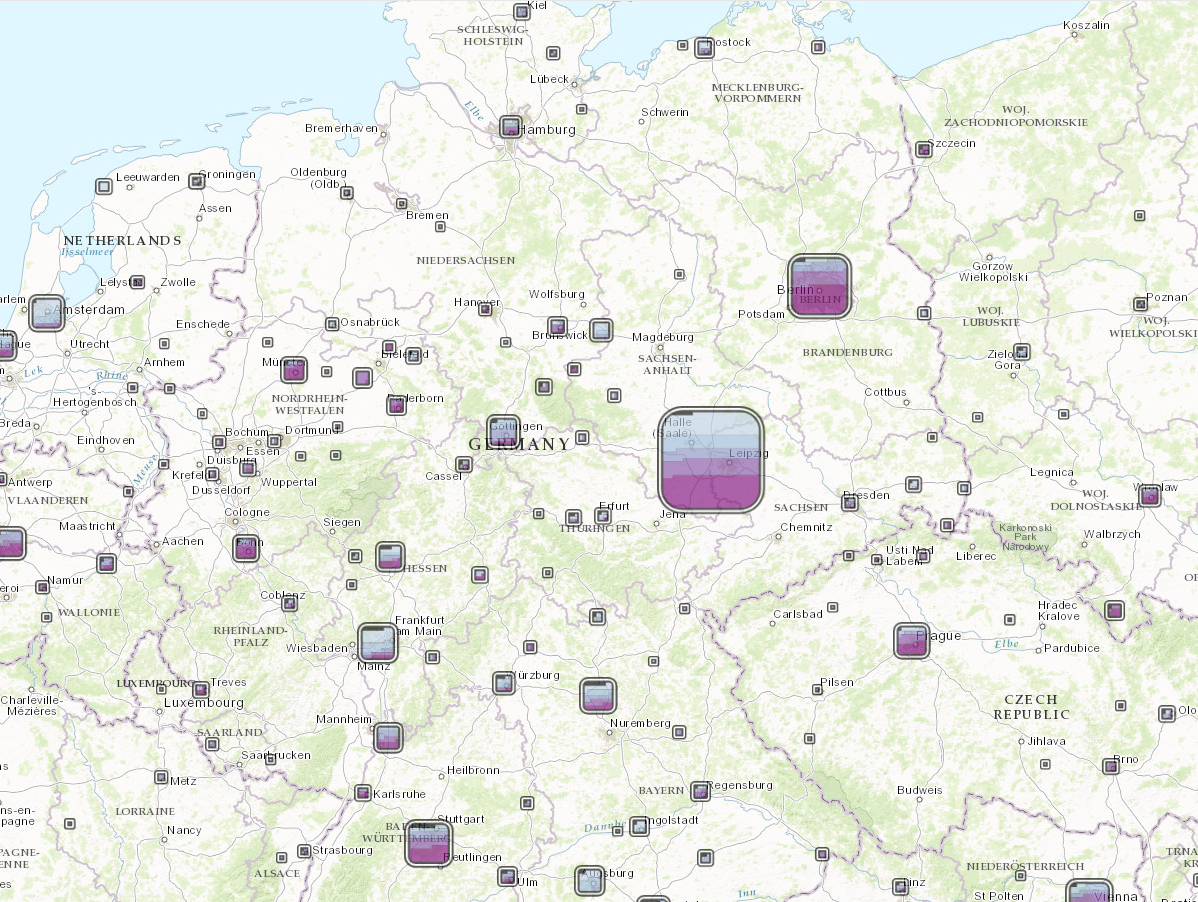}
  \caption{Zooming out in GlamMap will merge overlapping squares. This figure
  shows a sequence of three steps zooming out from the surroundings of Leipzig.}
  \label{fig:glammap}
\end{figure}

\begin{figure}[b]
  \centering\includegraphics{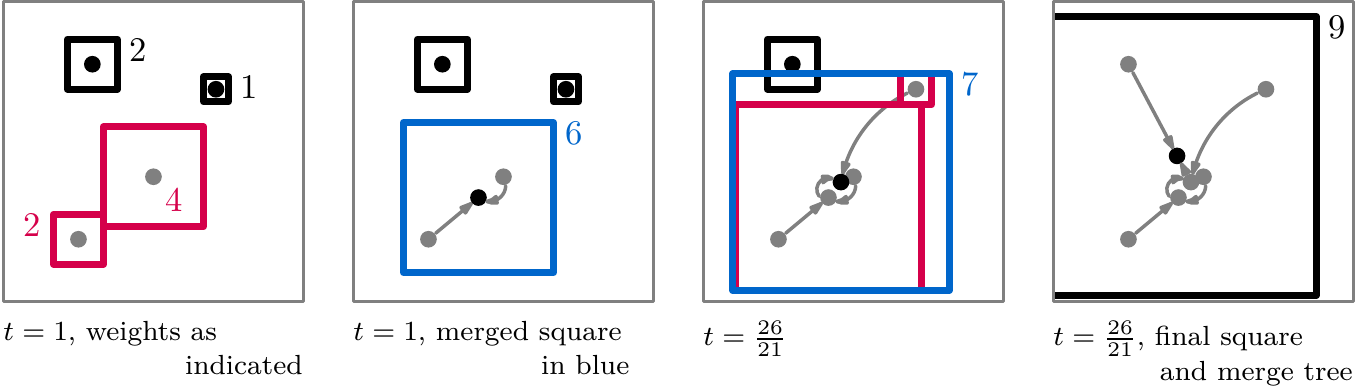}
  \caption{The timeline of squares that grow and merge as they touch.}
  \label{fig:zoom-out}
\end{figure}

\subparagraph{Formal problem statement.} Let $P$ be a set of points in $\R^2$ (the locations of publishers from our example). Each point $p \in P$
has a positive weight $p_w$ (number of books published in this city). Given a ``time'' parameter $t$, we interpret the
points in $P$ as squares. More specifically, let $\square_p(t)$ be the square
centered at $p$ with width $tp_w$. For ease of exposition we assume all $x$ and $y$ to be unique.
With some abuse of notation we may refer to
$P$ as a set of squares rather than the set of center points of
squares. Observe that initially, i.e. at $t = 0$, all squares in $P$ are
disjoint. As $t$ increases, the squares in $P$ grow, and hence they may start
to intersect. When two squares $\square_p(t)$ and $\square_q(t)$
intersect at time $t$, we remove both $p$ and $q$ and replace them by a new
point $z = \alpha p + (1-\alpha)q$, with $\alpha = w_p/(w_p+w_q)$, of weight
$z_w = p_w + q_w$ (see Fig.~\ref{fig:zoom-out}).


\subparagraph{Related Work.} Funke, Krumpe, and
Storandt~\cite{funke2016crushing} introduced so-called ``ball tournaments'', a
related, but simpler, problem, which is motivated by map labeling. Their input
is a set of balls in $\mathbb{R}^d$ with an associated set of priorities. The
balls grow linearly and whenever two balls touch, the ball with the lower
priority is eliminated. The goal is to compute the elimination sequence
efficiently. Bahrdt~\etal~\cite{bahrdt2017growing} and Funke and
Storandt~\cite{funke2017parametrized} improved upon the initial results and
presented bounds which depend on the ratio $\Delta$ of the largest to the
smallest radius. Specifically, Funke and Storandt~\cite{funke2017parametrized}
show how to compute an elimination sequence for $n$ balls in
$O(n \log \Delta (\log + \Delta^{d-1}))$ time in arbitrary dimensions and in
$O(Cn \polylog n)$ time for $d=2$, where $C$ denotes the number of different
radii.
In our setting eliminations are not sufficient, since merged glyphs
need to be re-inserted. Furthermore, as opposed to typical map labeling
problems where labels come in a fixed range of sizes, the sizes of our glyphs
can vary by a factor of 10.000 or more (Amsterdam with its many
well-established publishers vs. Kaldenkirchen with one obscure one).

Ahn~\etal~\cite{ahn2017faster} very recently and independently developed the first sub-quadratic algorithms to compute elimination orders for ball tournaments. Their results apply to balls and boxes in two or higher dimensions. Specifically, for squares in two dimensions they can compute an elimination order in $O(n \log^4 n)$ time. Their results critically depend on the fact that they know the elimination priorities at the start of their algorithm and that they only have to handle deletions. Hence they do not have to run an explicit simulation of the growth process and can achieve their results by the clever use of advanced data structures. In contrast, we are handling the fully dynamic setting with both insertions and deletions, and without a specified set of priorities.

Our clustering problem combines both dynamic and kinetic aspects: squares grow, which is a restricted form of movement, and squares are both inserted and deleted. There are comparatively few papers which tackle dynamic kinetic problems.  Alexandron
\etal~\cite{alexandron2007kinetic_envelope} present a dynamic and kinetic data
structure for maintaining the convex hull of points (or analogously, the lower
envelope of lines) moving in $\R^2$. Their data structure processes
(in expectation) $O(n^2\beta_{s+2}(n)\log n)$ events in $O(\log^2 n)$ time each.
Here, $\beta_{s}(n) = \lambda_s(n)/n$, and $\lambda_s(n)$ is the maximum length
of a Davenport-Schinzel sequence on $n$ symbols of order $s$.
Agarwal \etal~\cite{agarwal2008kinetictournament} present dynamic and kinetic
data structures for maintaining the closest pair and all nearest neighbors. The
expected number of events processed is again roughly
$O(n^2\beta_{s+2}(n)\polylog n)$, each of which can be handled in
$O(\polylog n)$ expected time. We are using some idea and constructions which
are similar in flavor to the structures presented in their paper.



\subparagraph{Results.} We present a fully dynamic data structure that can
maintain a set $P$ of disjoint growing squares. Our data structure will produce
an \emph{intersection event} at every time $t$ when two squares $\square_p$ and
$\square_q$, with $p,q \in P$, start to intersect (i.e.~at any time before $t$,
all squares in $P$ remain disjoint). At such a time, we then have to delete
some of the squares, to make sure that the squares in $P$ are again
disjoint. At any time, our data structure supports inserting a new square that
is disjoint from the squares in $P$, or removing an existing square from
$P$. Our data structure can handle a sequence of $m \geq n$ updates in a total
of $O(m\alpha(n)\log^7 n)$ time, each update is performed in $O(\log^7 n)$
amortized time.


\subparagraph{The Main Idea.}
We develop a data structure that can maintain a dynamic set of disjoint squares
$P$, and produce an intersection event at every time $t$ when $\square_q$
starts to intersect with a square $\square_p$ of a point $p \in P$ that
\emph{dominates $q$}. We say that a point $p$ dominates $q$ if and only if
$q_x \leq p_x$ and $q_y \leq p_y$. We then combine four of these data
structures, one for each quadrant, to make sure that all squares in $P$ remain
disjoint. The main observation that allows us to maintain $P$ efficiently, is
that we can maintain the points $D(q)$ dominating $q$ in an order so that a
prefix of $D(q)$ will have their squares intersect the top side of $\square_q$
first, and the remaining squares will intersect the right side of $\square_q$
first. We formalize this in Section~\ref{sec:Geometric_Properties}. We then
present our data structure --essentially a pair of range trees interlinked with
\emph{linking certificates}-- in Section~\ref{sec:DataStructure}. While our
data structure is conceptually simple, the exact implementation is somewhat
intricate, and the details are numerous. Our initial analysis shows that our
data structure maintains $O(\log^6 n)$ certificates per square, which yields an
$O(\log^7 n)$ amortized update time. This allows us to simulate the process of
growing the squares in $P$ --and thus solve the agglomerative glyph clustering problem-- in
$O(n\alpha(n)\log^7 n)$ time using $O(n\log^5 n)$ space. In
Section~\ref{sec:Link_Relations} we analyze the relation between canonical
subsets in dominance queries. We show that for two range trees $T^R$ and $T^B$
in $\R^d$, the number of pairs of nodes $r \in T^R$ and $b \in T^B$ for which
$r$ occurs in the canonical subset of a dominance query defined by $b$
and vice versa is only $O(n (\log n \log\log n)^2)$, where $n$ is the total size of $T^R$
and $T^B$. This implies that the number of linking certificates that our data
structure maintains, as well as the total space used, is actually only $O(n (\log n \log\log n)^2)$. Since the linking certificates actually provide an efficient representation of all dominance relations between two point sets (or within a point set), we believe that this result is of independent interest as well.



\begin{wrapfigure}[14]{R}{0.5\linewidth}
  \vspace{-1.6cm}
  \centering
  \includegraphics{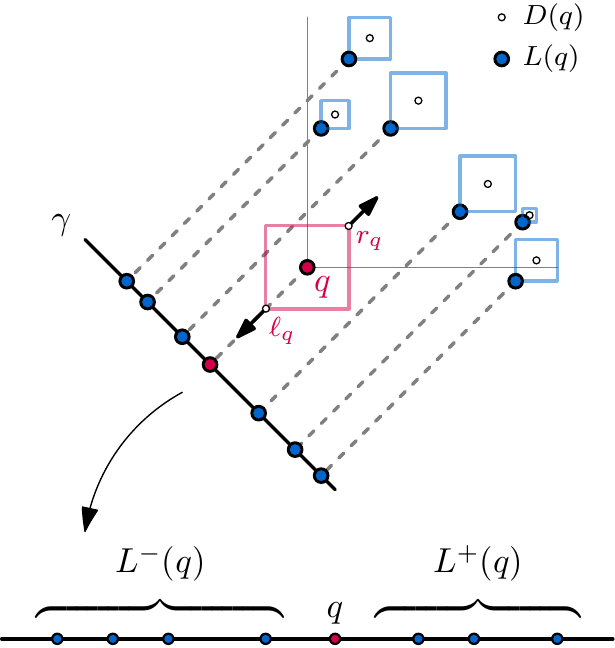}
  \caption{The squares and the projection of their centers and relevant corners
  onto the line $\gamma$.}
  \label{fig:dominates}
  \vspace{-2em} 
\end{wrapfigure}

\section{Geometric Properties}
\label{sec:Geometric_Properties}

Let $\ell_q$ denote the bottom left vertex of a square $\square_q$, and let
$r_q$ denote the top right vertex of $\square_q$. Furthermore, let $D(q)$
denote the subset of points of $P$ dominating $q$, and let
$L(q) = \{\ell_p \mid p \in D(q) \}$ denote the set of bottom left vertices of
the squares of those points.

\begin{observation}
  \label{obs:dominating}
  Let $p \in D(q)$ be a point dominating point $q$. The squares $\square_q(t)$
  and $\square_p(t)$ intersect at time $t$ if and only if $r_q(t)$ dominates
  $\ell_p(t)$ at time $t$.
\end{observation}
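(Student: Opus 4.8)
The plan is to reduce the statement ``$\square_q(t)$ and $\square_p(t)$ intersect'' to a pair of one-dimensional interval-overlap conditions --- one for the $x$-coordinates and one for the $y$-coordinates --- and then use the dominance hypothesis $q_x \le p_x$, $q_y \le p_y$ to throw away half of those conditions, leaving exactly the statement that $r_q(t)$ dominates $\ell_p(t)$.

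First I would write the relevant vertices out in coordinates. Since $\square_q(t)$ is the axis-aligned square of width $tq_w$ centered at $q$, its top right vertex is $r_q(t) = (q_x + tq_w/2,\; q_y + tq_w/2)$, and likewise the bottom left vertex of $\square_p(t)$ is $\ell_p(t) = (p_x - tp_w/2,\; p_y - tp_w/2)$. Two axis-aligned squares intersect if and only if their projections onto the $x$-axis overlap and their projections onto the $y$-axis overlap. For the $x$-axis this amounts to the two inequalities $q_x + tq_w/2 \ge p_x - tp_w/2$ and $p_x + tp_w/2 \ge q_x - tq_w/2$, and symmetrically for the $y$-axis.

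Next I would observe that the second inequality in each pair is automatically implied by dominance: because $q_x \le p_x$ and $tp_w/2, tq_w/2 > 0$, we get $p_x + tp_w/2 \ge p_x \ge q_x \ge q_x - tq_w/2$, so overlap in the $x$-direction is equivalent to $q_x + tq_w/2 \ge p_x - tp_w/2$, i.e.\ the $x$-coordinate of $r_q(t)$ is at least that of $\ell_p(t)$. The identical argument using $q_y \le p_y$ handles the $y$-direction. Conjoining the two surviving inequalities is precisely the condition that $r_q(t)$ dominates $\ell_p(t)$, and since every step is an equivalence this establishes both directions at once.

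There is no real obstacle here; the only point requiring a little care is the boundary case in which the two squares merely touch. With the non-strict ``$\le$'' in the definition of dominance (and intersection understood to include a shared boundary point, which is consistent with the way intersection events are defined), touching on a side counts as intersecting, so the equivalence is exact; had we used open squares instead, one would simply replace every ``$\ge$'' above by ``$>$'' and the same reasoning would go through unchanged.
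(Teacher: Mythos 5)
Your proof is correct. The paper states this as an observation without supplying a proof, since it is immediate, and your argument --- decompose the intersection condition into per-axis interval overlap, then note that the dominance hypothesis $q_x \le p_x$, $q_y \le p_y$ makes one inequality in each pair vacuous, leaving precisely the condition that $r_q(t)$ dominates $\ell_p(t)$ --- is exactly the elementary calculation the authors expect the reader to supply.
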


Consider a line $\gamma$ with slope minus one, project all points in
$Z(t)= \{r_q(t)\} \cup L(q)(t)$, for some time $t$, onto $\gamma$, and order
them from left to right. Observe that, since all points in $Z$ move along lines
with slope one, this order does not depend on the time $t$. Moreover, for any
point $p$, we have $r_p(0)=\ell_p(0)=p$, so we can easily compute this order by
projecting the centers of the squares onto $\gamma$ and sorting them. Let
$D^-(q)$ denote the (ordered) subset of $D(q)$ that occur before $q$
in the order along $\gamma$, and let $D^+(q)$ denote the ordered subset of
$D(q)$ that occur after $q$ in the order along $\gamma$. We define $L^-(q)$ and
$L^+(q)$ analogously.

\begin{observation}
  \label{obs:intersect}
  Let $p \in D(q)$ be a point dominating point $q$, and let $t^*$ be the first
  time at which $r=r_q(t^*)$ dominates $\ell=\ell_p(t^*)$. We then have that
  \begin{itemize}
  \item $\ell_x < r_x$ and $\ell_y=r_y$ if and only if $p \in D^-(q)$, and
  \item $\ell_x=r_x$ and $\ell_y < r_y$ if and only if $p \in D^+(q)$.
  \end{itemize}
  See Fig.~\ref{fig:dominates} for an illustration.
\end{observation}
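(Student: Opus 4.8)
The plan is to reduce the statement to the behaviour of two affine functions of $t$. Write the two relevant corners as $r_q(t) = q + \tfrac{tq_w}{2}(1,1)$ and $\ell_p(t) = p - \tfrac{tp_w}{2}(1,1)$, and consider the coordinate gaps
\[
  g_x(t) = r_q(t)_x - \ell_p(t)_x, \qquad g_y(t) = r_q(t)_y - \ell_p(t)_y.
\]
Both are affine in $t$ with the same slope $\tfrac12(q_w+p_w) > 0$. Since $p$ dominates $q$ and all $x$- and $y$-coordinates are distinct, $g_x(0) = q_x - p_x < 0$ and $g_y(0) = q_y - p_y < 0$, so each gap has a unique positive root, $t_x = \tfrac{2(p_x-q_x)}{q_w+p_w}$ and $t_y = \tfrac{2(p_y-q_y)}{q_w+p_w}$, and is negative before that root and positive after it. By the definition of domination, $r_q(t)$ dominates $\ell_p(t)$ exactly when $g_x(t) \ge 0$ and $g_y(t) \ge 0$, hence the first such time is $t^* = \max\{t_x, t_y\}$.

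Next I would split on whether $t_x > t_y$ or $t_y > t_x$. If $t_x > t_y$, then at $t = t^* = t_x$ we have $g_x(t^*) = 0$ and $g_y(t^*) > 0$, i.e.\ $\ell_x = r_x$ and $\ell_y < r_y$; by symmetry, $t_y > t_x$ yields $\ell_x < r_x$ and $\ell_y = r_y$. The remaining case $t_x = t_y$ would force $p_x - p_y = q_x - q_y$, i.e.\ $p$ and $q$ projecting onto the same point of $\gamma$; this is excluded by general position, consistently with $D^-(q)$ and $D^+(q)$ partitioning $D(q)$.

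It then remains to identify which of $t_x > t_y$, $t_y > t_x$ corresponds to $p \in D^+(q)$, $p \in D^-(q)$. Dividing by the positive quantity $\tfrac12(q_w+p_w)$ gives $t_x < t_y \iff p_x - q_x < p_y - q_y \iff p_x - p_y < q_x - q_y$. On the other hand, each corner moves along a line of slope $+1$, and the point at which such a line meets $\gamma$ (a line of slope $-1$) is a strictly monotone function of the invariant $x-y$ of that line; since $r_q(0)=q$ and $\ell_p(0)=p$, the corner $r_q$ always sits at the $\gamma$-position given by $q_x - q_y$ and $\ell_p$ at the one given by $p_x - p_y$ --- exactly the positions used to order $Z$ along $\gamma$ and thus to define $D^-(q)$ and $D^+(q)$. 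Chaining these equivalences, $p \in D^-(q) \iff p_x - p_y < q_x - q_y \iff t_x < t_y$, which by the previous paragraph is equivalent to $\ell_x < r_x$ and $\ell_y = r_y$; symmetrically $p \in D^+(q) \iff t_x > t_y$, equivalent to $\ell_x = r_x$ and $\ell_y < r_y$. These are the two claimed equivalences.

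The only real subtlety I anticipate is bookkeeping about orientation: fixing which direction along $\gamma$ counts as ``left'' and the matching sign of $x-y$, so that the chain of equivalences in the last step points the right way, and checking that the general-position assumption does rule out the degenerate case $t_x = t_y$ (equivalently, coincident projections onto $\gamma$). Everything else is just the comparison of two lines of equal slope.
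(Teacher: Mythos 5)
Your algebraic argument is correct, and it rigorously fills in what the paper merely states as an observation (the paper gives no proof, pointing only to Fig.~\ref{fig:dominates}). Reducing the problem to the two coordinate gaps $g_x, g_y$---affine in $t$ with a common positive slope---cleanly yields $t^* = \max\{t_x, t_y\}$, and the case split on which of $t_x, t_y$ is larger gives the two equalities. The orientation step ($p \in D^-(q) \iff p_x - p_y < q_x - q_y$, equivalently $t_x < t_y$) checks out: along $\gamma$, ``left to right'' increases $x$ and decreases $y$, so position along $\gamma$ is monotone increasing in $x - y$, and chasing the signs gives exactly the stated pairing. Two small remarks. First, you are right that the degenerate case $t_x = t_y$ (i.e.\ $p$ and $q$ projecting to the same point of $\gamma$) must be ruled out; note that this is not literally implied by the paper's stated assumption that all $x$- and $y$-coordinates are distinct---it is an additional general-position hypothesis that the paper uses implicitly, since otherwise $D^-(q)$ and $D^+(q)$ would not partition $D(q)$. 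Second, you can shorten the orientation discussion by simply observing that $r_q(0) = q$ and $\ell_p(0) = p$, so the $\gamma$-order of $r_q$ and $\ell_p$ is the same as the $\gamma$-order of $q$ and $p$ used to define $D^\pm(q)$; this is the fact the paper relies on when it says the order ``does not depend on the time $t$.''
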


Observation~\ref{obs:intersect} implies that the points $p$ in $D^-(q)$ will
start to intersect $\square_q$ at some time $t^*$ because the bottom left vertex
$\ell_p$ of $\square_p$ will enter $\square_q$ through the top edge, whereas
the bottom left vertex of the (squares of the) points in $D^+(q)$ will enter
$\square_q$ through the right edge. We thus obtain the following result.

\begin{lemma}
  \label{lem:intersect}
  Let $t^*$ be the first time at which a square $\square_p$ of a point
  $p \in D(q)$ intersects $\square_q$. We then have that

  \hspace{-2em}
  \begin{tabular}{b{0.5\linewidth}p{0.01\linewidth}p{0.4\linewidth}}
    \begin{enumerate}
    \item[(i)] $r_q(t^*)_y = \ell_p(t^*)_y$, and $\ell_p(t^*)$ is the point
      with minimum $y$-coordinate among the points in $L^-(q)(t^*)$ at time
      $t^*$,\vspace*{-\baselineskip}
    \end{enumerate}
    && if and only if $p \in D^-(q)$, and \\
    \begin{enumerate}
    \item[(ii)] $r_q(t^*)_x = \ell_p(t^*)_x$, and $\ell_p(t^*)$ is the point
      with minimum $x$-coordinate among the points in $L^+(q)(t^*)$ at time
      $t^*$, \vspace*{-\baselineskip}
    \end{enumerate}
    && \vspace*{-\baselineskip} otherwise (i.e.~if and only if $p \in D^+(q)$).
  \end{tabular}
\end{lemma}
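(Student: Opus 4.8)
The plan is to derive Lemma~\ref{lem:intersect} directly from Observation~\ref{obs:dominating}, Observation~\ref{obs:intersect}, and the monotonicity of the growth process. First I would fix the point $p \in D(q)$ whose square is the \emph{first} to intersect $\square_q$, at time $t^*$. By Observation~\ref{obs:dominating}, intersection of $\square_p$ and $\square_q$ is equivalent to $r_q$ dominating $\ell_p$; since at $t^*$ this is the \emph{first} such contact (over all points of $D(q)$, not just $p$), the point $r_q(t^*)$ must lie on the boundary of $\square_p(t^*)$ -- i.e.\ $\ell_p(t^*)$ lies on the lower-left boundary of the region dominated by $r_q(t^*)$, so either $r_q(t^*)_y = \ell_p(t^*)_y$ with $\ell_p(t^*)_x < r_q(t^*)_x$, or $r_q(t^*)_x = \ell_p(t^*)_x$ with $\ell_p(t^*)_y < r_q(t^*)_y$. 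Observation~\ref{obs:intersect} tells us precisely which of these two cases holds: the first corresponds to $p \in D^-(q)$ and the second to $p \in D^+(q)$. This already pins down the equality statements in (i) and (ii) and the "if and only if" dichotomy, since $D(q)$ is the disjoint union of $D^-(q)$ and $D^+(q)$.

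It remains to establish the minimality claims: in case (i), that $\ell_p(t^*)$ has minimum $y$-coordinate among $L^-(q)(t^*)$, and symmetrically in case (ii). I would argue this by contradiction. Suppose $p \in D^-(q)$ but some other $p' \in D^-(q)$ has $\ell_{p'}(t^*)_y < \ell_p(t^*)_y$. All bottom-left vertices $\ell_{p'}$ and the top-right vertex $r_q$ move along lines of slope one at unit "speed" in the relevant sense, so the relative order along $\gamma$ is time-invariant (as noted in the text just before Observation~\ref{obs:intersect}); in particular every vertex in $L^-(q)$ stays to the left of $r_q$ along $\gamma$ for all time. For such a point, the $y$-coordinates of $r_q(t)$ and $\ell_{p'}(t)$ are what matters: $r_q(t)_y$ increases with $t$ while $\ell_{p'}(t)_y$ also moves, but the gap $r_q(t)_y - \ell_{p'}(t)_y$ is monotone (it is an affine function of $t$ with the same leading behavior for all $L^-$ points), so the first time $r_q$ dominates $\ell_{p'}$ is exactly the first time $r_q(t)_y = \ell_{p'}(t)_y$. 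Since $\ell_{p'}(t^*)_y < \ell_p(t^*)_y = r_q(t^*)_y$, monotonicity forces this crossing to have happened strictly before $t^*$, so $\square_{p'}$ already intersected $\square_q$ before $t^*$ -- contradicting the choice of $t^*$ as the first intersection time. The symmetric argument, swapping the roles of $x$ and $y$ and of $D^-$/$L^-$ with $D^+$/$L^+$, handles case (ii).

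Finally I would also check the converse directions of the two "if and only if" statements, but these are immediate: if $\ell_p(t^*)$ realizes the $y$-equality and $y$-minimality over $L^-(q)$, then by Observation~\ref{obs:intersect} the only way this geometric configuration arises is with $p \in D^-(q)$ (the alternative, $p \in D^+(q)$, would force $x$-equality instead), and similarly for (ii); together with the forward direction already shown, this closes the equivalence.

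I expect the main obstacle to be making the monotonicity argument in the second paragraph fully rigorous: one has to be careful that $\ell_p$ is not the center of $\square_p$ but its bottom-left corner, so $\ell_p(t) = p - \tfrac12 t p_w (1,1)$ moves \emph{down-left}, while $r_q(t) = q + \tfrac12 t q_w(1,1)$ moves \emph{up-right}. Thus $r_q(t)_y - \ell_p(t)_y = (q_y - p_y) + \tfrac12 t(q_w + p_w)$ is genuinely increasing in $t$ with a slope that does not depend on which $p \in D^-(q)$ we pick (only on $p_w$, but the contact time comparison still reduces cleanly to comparing $\ell_{p'}(t^*)_y$ against the common value $r_q(t^*)_y$ once one observes that before the first global intersection, $r_q(t)$ dominates \emph{no} $\ell_{p'}$, hence $r_q(t)_y < \ell_{p'}(t)_y$ for every $p' \in D^-(q)$ and all $t < t^*$ with equality achievable). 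Getting this bookkeeping exactly right -- and in particular justifying that "first to have $y$-coordinates equal" coincides with "first to be dominated" for $L^-$ points -- is the only genuinely delicate part; everything else is a direct appeal to the two observations.
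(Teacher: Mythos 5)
Your proposal is correct and follows the same route the paper takes: the paper gives no explicit proof of Lemma~\ref{lem:intersect}, presenting it as an immediate consequence of Observation~\ref{obs:intersect}, and your argument simply makes that deduction explicit. In particular, your monotonicity argument for the minimality claims (the gap $r_q(t)_y - \ell_{p'}(t)_y$ is increasing and, for $p' \in D^-(q)$, its zero is the first domination time, so a smaller $y$-coordinate at $t^*$ would force an earlier intersection) is exactly the missing detail and is carried out correctly.
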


\section{A Kinetic Data Structure for Growing Squares}
\label{sec:DataStructure}

In this section we present a data structure that can detect the first
intersection among a dynamic set of disjoint growing squares. In particular, we
describe a data structure that can detect intersections between all pairs of
squares $\square_p, \square_q$ in $P$ such that $p \in D^+(q)$. We build an
analogous data structure for when $p \in D^-(q)$. This covers all intersections
between pairs of squares $\square_p, \square_q$, where $p \in D(q)$. We then
use four copies of these data structures, one for each quadrant, to detect the
first intersection among all pairs of squares.

We describe the data structure itself in Section~\ref{sub:The_DataStructure},
and we briefly describe how to query it in Section~\ref{sub:Queries}. We deal
with updates, e.g. inserting a new square into $P$ or deleting an existing
square from $P$, in Section~\ref{sub:Updates}. In
Section~\ref{sub:Running_the_Simulation} we analyze the total number of events
that we have to process, and the time required to do so, when we grow the
squares.

\subsection{The Data Structure}
\label{sub:The_DataStructure}

Our data structure consists of two three-layered trees $T^L$ and $T^R$, and
a set of certificates linking nodes in $T^L$ to nodes in $T^R$. These trees
essentially form two 3D range trees on the centers of the squares in $P$, taking
third coordinate $p_\gamma$ of each point to be their rank in the order along
the line $\gamma$ (ordered from left to right). The third layer of $T^L$ will
double as a kinetic tournament tracking the bottom left vertices of squares.
Similarly, $T^R$ will track the top right vertices of the squares.

\subparagraph{The Layered Trees.} The tree $T^L$ is a 3D-range tree storing the
center points in $P$. Each layer is implemented by a weight-balanced binary
search tree (\BBalpha{}~tree)~\cite{nievergelt1973bbalphatree}, and each node
$\mu$ corresponds to a canonical subset $P_\mu$ of points stored in the leaves
of the subtree rooted at $\mu$. The points are ordered on $x$-coordinate first,
then on $y$-coordinate, and finally on $\gamma$-coordinate. Let $L_\mu$ denote
the set of bottom left vertices of squares corresponding to the set $P_\mu$,
for some node $\mu$.


Consider the associated structure $X^L_v$ of some secondary node $v$. We
consider $X^L_v$ as a kinetic tournament on the $x$-coordinates of the points
$L_v$~\cite{agarwal2008kinetictournament}. More specifically, every internal
node $w \in X^L_v$ corresponds to a set of points $P_w$ consecutive along the
line $\gamma$. Since the $\gamma$-coordinates of a point $p$ and its bottom left
vertex $\ell_p$ are equal, this means $w$ also corresponds to a set of
consecutive bottom left vertices $L_w$. Node $w$ stores the vertex $\ell_p$ in
$L_w$ with minimum $x$-coordinate, and will maintain certificates that
guarantee this~\cite{agarwal2008kinetictournament}.

The tree $T^R$ has the same structure as $T^L$: it is a three-layered range tree
on the center points in $P$. The difference is that a ternary structure
$X^R_v$, for some secondary node $v$, forms a kinetic tournament maintaining
the maximum $x$-coordinate of the points in $R_v$, where $R_v$ are the top right
vertices of the squares (with center points) in $P_v$. Hence, every ternary
node $z \in X^R_v$ stores the vertex $r_q$ with maximum $x$-coordinate among
$R_v$.

Let $\X^L$ and $\X^R$ denote the set of all kinetic tournament nodes in $T^L$
and $T^R$, respectively.

\subparagraph{Linking the Trees.} Next, we describe how to add \emph{linking
  certificates} between the kinetic tournament nodes in the trees $T^L$ and
$T^R$ that guarantee the squares are disjoint. More specifically, we describe
the certificates, between nodes $w \in \X^L$ and $z \in \X^R$, that
guarantee that the squares $\square_p$ and $\square_q$ are disjoint, for all pairs
$q \in P$ and $p \in D^+(q)$. 

Consider a point $q$. There are $O(\log^2 n)$ nodes in the secondary trees of
$T^L$, whose canonical subsets together represent exactly $D(q)$. For each of
these nodes $v$ we can then find $O(\log n)$ nodes in $X^L_v$ representing the
points in $L^+(q)$. So, in total $q$ is \textit{interested in} a set $Q^L(q)$
of $O(\log^3 n)$ kinetic tournament nodes. It now follows from
Lemma~\ref{lem:intersect} that if we were to add certificates certifying that
$r_q$ is left of the point stored at the nodes in $Q^L(q)$ we can detect when
$\square_q$ intersects with a square of a point in $D^+(q)$. However,
as there may be many points $q$ interested in a particular kinetic
tournament node $w$, we cannot afford to maintain all of these certificates. The
main idea is to represent all of these points $q$ by a number of canonical subsets
of nodes in $T^R$, and add certificates to only these nodes.

Consider a point $p$. Symmetric to the above construction, there are
$O(\log^3 n)$ nodes in kinetic tournaments associated with $T^R$ that together
exactly represent the (top right corners of) the points $q$ dominated by $p$
and for which $p \in D^+(q)$. Let $Q^R(p)$ denote this set of kinetic
tournament nodes.

\begin{figure}[tb]
  \centering
  \includegraphics{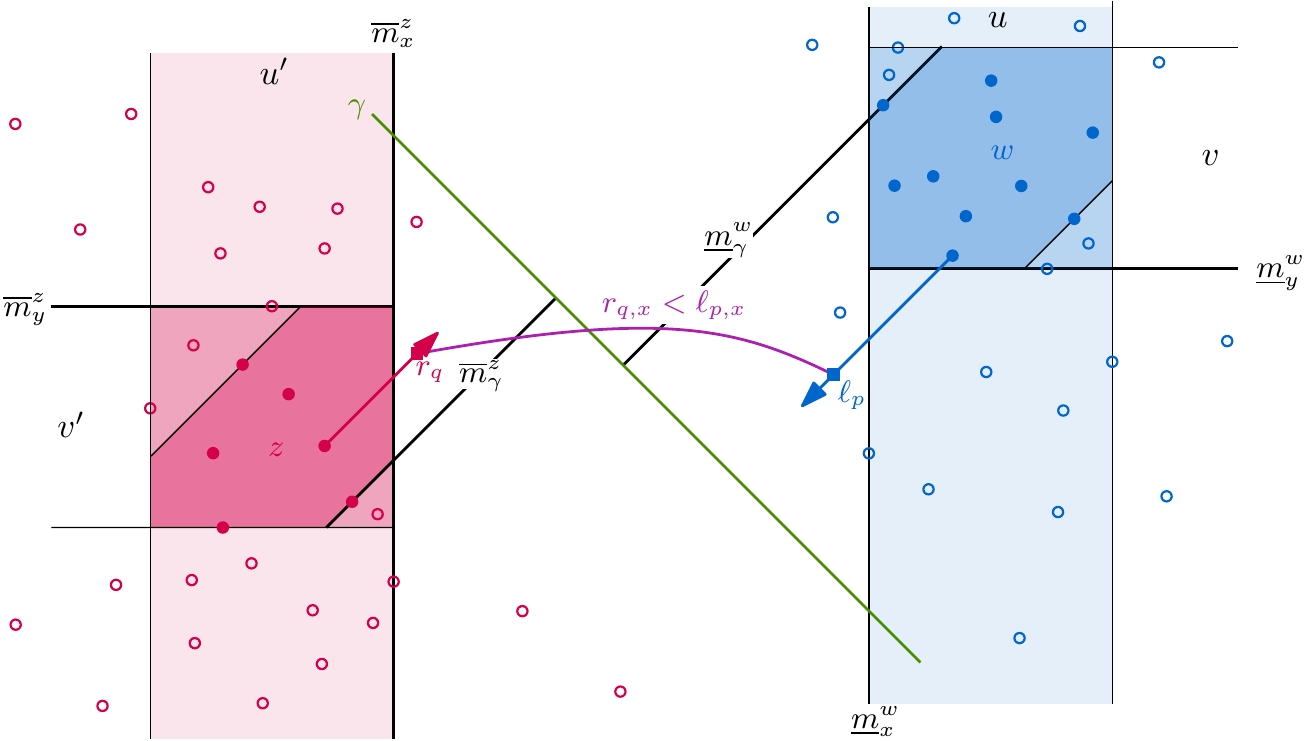}
  \caption{The points $\overline{m}^z$ and $\underline{m}^w$ are defined by a
    pair of nodes $z \in \X^R_{v'}$, with $v' \in T_{u'}$, and $w \in X^L_v$,
    with $v \in T_u$. If $w \in Q^L(\overline{m}^z)$ and $z \in
    Q(\underline{m}^w)$ then we add a linking certificate between the rightmost
    upper right-vertex $r_q$, $q \in P_z$, and the leftmost bottom left vertex
    $\ell_p$, $p \in P_w$.}
  \label{fig:boxes}
\end{figure}

Next, we extend the definitions of $Q^L$ and $Q^R$ to kinetic tournament
nodes. To this end, we first associate each kinetic tournament node with a
(query) point in $\R^3$. Consider a kinetic tournament node $w$ in a tournament
$X^L_v$, and let $u$ be the node in the primary $T^L$ for which $v \in T_u$.
Let
$\underline{m}^w = (\min_{a \in P_u} a_x, \min_{b \in P_v} b_y, \min_{c \in
  P_w} c_\gamma)$ be the point associated with $w$ (note that we take the
minimum over different sets $P_u$, $P_v$, and $P_w$ for the different
coordinates), and define $Q^R(w) = Q^R(\underline{m}^w)$. Symmetrically, for a
node $z$ in a tournament $X^R_v$, with $v \in T_u$ and $u \in T^R$, we define
$\overline{m}^z = (\max_{a \in P_u} a_x, \max_{b \in P_v} b_y, \max_{c \in P_z}
c_\gamma)$ and $Q^L(z) = Q^L(\overline{m}^z)$.

We now add a linking certificate between every pair of nodes $w \in \X^L$ and
$z \in \X^R$ for which (i) $w$ is a node in the canonical subset of $z$, that
is $w \in Q^L(z)$, \emph{and} (ii) $z$ is a node in the canonical subset of
$w$, $z \in Q^R(w)$. Such a certificate will guarantee that the point $r_q$
currently stored at $z$ lies left of the point $\ell_p$ stored at $w$.
\begin{lemma}
  \label{lem:locality}
  Every kinetic tournament node is involved in $O(\log^3 n)$ linking
  certificates, and thus every point $p$ is associated with at most $O(\log^6
  n)$ certificates.
\end{lemma}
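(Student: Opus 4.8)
The plan is to prove the two bounds in sequence, deriving the second from the first. First I would bound the number of linking certificates in which a single kinetic tournament node $z \in \X^R$ is involved. By definition, such a certificate between $z$ and some $w \in \X^L$ requires both $w \in Q^L(z)$ and $z \in Q^R(w)$; in particular, it requires $w \in Q^L(z) = Q^L(\overline{m}^z)$. But $Q^L(\overline{m}^z)$ is, by construction, a set of exactly $O(\log^3 n)$ kinetic tournament nodes (the $O(\log^2 n)$ secondary-tree nodes whose canonical subsets represent $D(\overline{m}^z)$, each contributing $O(\log n)$ tournament nodes for $L^+$). So $z$ can be involved in at most $O(\log^3 n)$ linking certificates. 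By the symmetric argument — any certificate involving a fixed $w \in \X^L$ requires $z \in Q^R(w) = Q^R(\underline{m}^w)$, and $|Q^R(\underline{m}^w)| = O(\log^3 n)$ — every node in $\X^L$ is likewise involved in $O(\log^3 n)$ certificates. This gives the first claim.

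For the second claim, I would count, for a fixed point $p \in P$, how many kinetic tournament nodes ``see'' $p$ in the sense that could give rise to a certificate charged to $p$. A certificate is associated with $p$ exactly when $p$ lies in the canonical subset $P_w$ of some kinetic tournament node $w \in \X^L$ that carries a linking certificate (and symmetrically for $\X^R$). The key structural fact is that in a three-layered range tree, a fixed leaf $p$ lies in the canonical subsets of only $O(\log^3 n)$ nodes in total: $p$ appears on the root-to-leaf path in the primary tree ($O(\log n)$ nodes $u$), within each such $u$'s secondary tree $p$ again lies on a root-to-leaf path ($O(\log n)$ nodes $v$), and within each $X^L_v$ the tournament is a balanced tree in which $p$ again lies on $O(\log n)$ nodes' paths. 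Hence $p$ belongs to $P_w$ for only $O(\log^3 n)$ nodes $w \in \X^L$, and to $P_z$ for only $O(\log^3 n)$ nodes $z \in \X^R$. Multiplying by the $O(\log^3 n)$ certificates per node from the first claim yields $O(\log^6 n)$ certificates associated with $p$.

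The main thing to get right — and the step I expect to require the most care — is the bookkeeping that makes ``$p$ is associated with a certificate'' precise and matches it to the $O(\log^3 n)$ nodes whose canonical subsets contain $p$. One must check that each linking certificate between $w$ and $z$ is charged to the (constantly many, essentially two) extremal points actually appearing in it, namely the current $r_q$ stored at $z$ and the current $\ell_p$ stored at $w$; and that a point $p$ can be the stored representative of a tournament node $w$ only if $p \in P_w$, so that the $O(\log^3 n)$-nodes-per-point count indeed dominates. The rest is the routine observation that a balanced ($\alpha$-weight-balanced) search tree has logarithmic height, so each of the three layers contributes only a logarithmic factor.
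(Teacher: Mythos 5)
Your proposal is correct and follows essentially the same route as the paper: bound the certificates per kinetic tournament node by $|Q^L(z)|, |Q^R(w)| = O(\log^3 n)$, then observe that a point $p$ lies in the canonical subsets of only $O(\log^3 n)$ tournament nodes (one root-to-leaf path per layer, three layers) and multiply. The bookkeeping concern you flag at the end is exactly the (implicit) charging the paper uses, so there is no gap.
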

\begin{proof}
  We start with the first part of the lemma statement. Every node $w \in \X^L$ can be
  associated with at most $O(\log^3 n)$ linking certificates: one with each
  node in $Q^R(w)$. Analogously, every node $z \in \X^R$ can be associated with
  at most $O(\log^3 n)$ linking certificates: one for each node in $Q^L(z)$.

  Every point $p$ occurs in the canonical subset of at most $O(\log^3 n)$
  kinetic tournament nodes in both $\X^L$ and $\X^R$: $p$ is stored in
  $O(\log^2 n)$ leaves of the kinetic tournaments, and in each such a
  tournament it can participate in $O(\log n)$ certificates (at most two
  tournament certificates in $O(\log n)$ nodes). As we argued above, each such
  a node itself occurs in at most $O(\log^3 n)$ certificates. The lemma
  follows.
\end{proof}
%
%
What remains to argue is that we can still detect the first upcoming
intersection. 

\begin{lemma}
  \label{lem:matching_pairs}
  Consider two sets of elements, say blue elements $B$ and red elements $R$,
  stored in the leaves of two binary search trees $T^B$ and $T^R$,
  respectively, and let $p \in B$ and $q \in R$, with $q < p$, be leaves in
  trees $T^B$ and $T^R$, respectively. There is a pair of nodes $b \in T^B$ and
  $r \in T^R$, such that
  \begin{itemize}
    \item $p \in P_b$ and $b \in C(T^B,[x',\infty))$, and
    \item $q \in P_r$ and $r \in C(T^R,(-\infty,x])$,
  \end{itemize}
  \noindent
  where $x' = \max P_r$, $x = \min P_b$, and $C(T^S,I)$ denotes the minimal set
  of nodes in $T^S$ whose canonical subsets together represent exactly the
  elements of $S \cap I$.
\end{lemma}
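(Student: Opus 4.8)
The plan is to pick $b$ and $r$ as the \emph{highest mutually compatible} ancestors of the two given leaves, and then to check that mutual compatibility is exactly the condition that places them in the two canonical decompositions. Observe first that in any valid answer $b$ must lie on the root-to-leaf path of $p$ in $T^B$ (so that $p \in P_b$) and $r$ on the root-to-leaf path of $q$ in $T^R$; the delicate point is that the interval defining $b$ has left endpoint $x' = \max P_r$ while the interval defining $r$ has right endpoint $x = \min P_b$, so the two choices are coupled.

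To set this up I would let $A$ be the set of ancestors of the leaf $p$ in $T^B$ and $A'$ the set of ancestors of the leaf $q$ in $T^R$ (each a chain), and call a pair $(\beta,\rho) \in A \times A'$ \emph{compatible} if $\min P_\beta \ge \max P_\rho$. Since $q < p$, the pair consisting of the two leaves is compatible, so compatible pairs exist. The structural fact I would use is that compatibility passes to descendants in either coordinate: moving $\beta$ down the chain $A$ only shrinks $P_\beta$ and hence only increases $\min P_\beta$, and symmetrically moving $\rho$ down $A'$ only decreases $\max P_\rho$. Hence the set $\{\beta \in A : \beta \text{ has a compatible partner}\}$ is closed under passing to descendants, so it has a unique topmost element $b$; fixing $b$, the set of its compatible partners in $A'$ is likewise downward closed and has a topmost element $r$. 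Writing $x = \min P_b$ and $x' = \max P_r$, compatibility gives $x \ge x'$.

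Then I would verify, using the standard characterization that $\nu \in C(T^S,I)$ exactly when $P_\nu \subseteq I$ and ($\nu$ is the root or $P_{\mathrm{parent}(\nu)} \not\subseteq I$), that $b \in C(T^B,[x',\infty))$ and $r \in C(T^R,(-\infty,x])$. For $b$: every element of $P_b$ is at least $\min P_b = x \ge x'$, so $P_b \subseteq [x',\infty)$; and if $b$ is not the root then $\min P_{\mathrm{parent}(b)} < x'$, since otherwise $(\mathrm{parent}(b),r)$ would be a compatible pair whose first coordinate $\mathrm{parent}(b)$ lies strictly above $b$, contradicting the maximality of $b$. The argument for $r$ is symmetric, with the maximality of $r$ as a compatible partner of $b$ forcing $\max P_{\mathrm{parent}(r)} > x$ when $r$ is not the root. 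Since $b$ and $r$ are ancestors of $p$ and $q$ we also have $p \in P_b$ and $q \in P_r$, which is everything the lemma asks for.

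The step I expect to be the main obstacle is making the coupled selection rigorous: one has to be sure that the greedy recipe ``take the highest admissible $b$, then the highest admissible partner $r$'' does not over-commit, and the downward-closure observation is precisely what rules this out. An equivalent and perhaps more transparent way to see it is to run the alternating iteration that starts from $r := q$ and then repeatedly replaces $b$ by the highest ancestor of $p$ with $\min P_b \ge \max P_r$ and $r$ by the highest ancestor of $q$ with $\max P_r \le \min P_b$: both indices only move up and are bounded, so the process stabilizes at a pair meeting the lemma's conditions. The only remaining loose end is the case where $b$ or $r$ is the root, but that is immediate: $p \in B \cap [x',\infty)$ and $q \in R \cap (-\infty,x]$ are nonempty (again because $q < p$), so in that case the relevant canonical decomposition is just the single root node.
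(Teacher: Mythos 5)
Your proof is correct and is essentially the paper's argument in different clothing: your ``topmost ancestor of $p$ with a compatible partner'' is exactly the paper's first node on the root-to-$p$ path whose subset lies in $[q,\infty)$ (since the leaf $q$ minimizes $\max P_\rho$ over ancestors $\rho$ of $q$), and your $r$ coincides with the paper's choice relative to $(-\infty,\min P_b]$, with the same maximality-based verification of membership in the canonical decompositions. No further comment needed.
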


\begin{wrapfigure}[15]{R}{0.3\linewidth}
  \vspace{-1.5\baselineskip}
  \centering
  \includegraphics{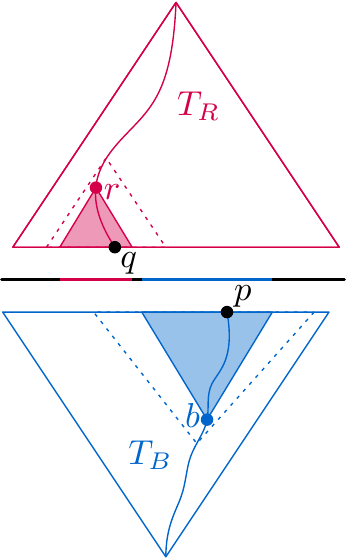}
  \caption{The nodes $b$ and $r$ in the trees $T^B$ and $T^R$.}
  \label{fig:matching_pairs}
\end{wrapfigure}

\noindent\textcolor{darkgray}{\sffamily\bfseries Proof.}
  Let $b$ be the first node on the path from the root of $T^B$ to $p$ such that
  the canonical subset $P_b$ of $b$ is contained in the interval $[q,\infty)$,
  but the canonical subset of the parent of $b$ is not. We define $b$ to be the
  root of $T^B$ if no such node exists. We define $r$ to be the first node on
  the path from the root of $T^R$ to $q$ for which $P_r$ is contained in
  $(-\infty,x]$ but the canonical subset of the parent is not. We again define
  $r$ as the root of $T^R$ if no such node exists. See
  Fig.~\ref{fig:matching_pairs}. Clearly, we now directly have that $r$ is one
  of the nodes whose canonical subsets form $R \cap (-\infty,x]$, and that
  $q \in P_r$ (as $r$ lies on the search path to $q$). It is also easy to see
  that $p \in P_b$, as $b$ lies on the search path to $p$. All that remains is
  to show that $b$ is one of the canonical subsets that together form
  $B \cap [x', \infty)$. This follows from the fact that $q \leq x' < x \leq p$
  ---and thus $P_b$ is indeed a subset of $[x', \infty)$--- and the fact that
  the subset of the parent $v$ of $b$ contains an element smaller than $q$, and
  can thus not be a subset of~$[x', \infty)$.\hfill\qed

\begin{lemma}
  \label{lem:detect_intersecting_squares}
  Let $\square_p$ and $\square_q$, with $p \in D^+(q)$, be the first pair of
  squares to intersect, at some time $t^*$, then there is a pair of nodes
  $w,z$ that have a linking certificate that fails at time $t^*$.
\end{lemma}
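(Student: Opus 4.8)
The plan is to first exhibit, by a purely combinatorial argument, a pair of kinetic tournament nodes $w\in\X^L$ and $z\in\X^R$ that are joined by a linking certificate and that satisfy $p\in P_w$ and $q\in P_z$, and then to show that this certificate is exactly the one that breaks at time $t^*$. For the first step I would invoke Lemma~\ref{lem:matching_pairs} once per layer of the two range trees. Since $p\in D^+(q)$ we have $q_x<p_x$, $q_y<p_y$, and $q_\gamma<p_\gamma$. Applying Lemma~\ref{lem:matching_pairs} to the primary trees $T^L$ and $T^R$ (both ordered on $x$) with leaves $p$ and $q$ yields primary nodes $u\in T^L$ and $u'\in T^R$ with $p\in P_u$, $q\in P_{u'}$, $u\in C(T^L,[\max P_{u'},\infty))$, and $u'\in C(T^R,(-\infty,\min P_u])$. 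Because $p\in P_u$ and $q\in P_{u'}$, I can recurse: apply Lemma~\ref{lem:matching_pairs} inside the secondary trees $T_u$ and $T_{u'}$ (ordered on $y$) with the same leaves to obtain $v\in T_u$, $v'\in T_{u'}$ with $p\in P_v$, $q\in P_{v'}$ and the analogous mutual containment in $y$; and once more inside the tertiary kinetic tournaments $X^L_v$ and $X^R_{v'}$ (ordered on $\gamma$) to obtain $w\in X^L_v$, $z\in X^R_{v'}$ with $p\in P_w$, $q\in P_z$ and mutual containment in $\gamma$. Chaining the three containments shows that $u,v,w$ form precisely a chain of canonical nodes for the three-dimensional dominance query with corner $\overline m^z=(\max_{a\in P_{u'}}a_x,\max_{b\in P_{v'}}b_y,\max_{c\in P_z}c_\gamma)$, so $w\in Q^L(\overline m^z)=Q^L(z)$; symmetrically $z\in Q^R(\underline m^w)=Q^R(w)$. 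Hence the data structure maintains a linking certificate between $w$ and $z$.

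The chained containments also give $P_w\subseteq\{p':p'_x\ge q_x,\ p'_y\ge q_y,\ p'_\gamma>q_\gamma\}=D^+(q)$, so $L_w\subseteq L^+(q)$, and $P_z\subseteq\{q':q'_x\le p_x,\ q'_y\le p_y,\ q'_\gamma<p_\gamma\}$, so $p\in D^+(q')$ for every $q'\in P_z$; more generally every $p'\in P_w$ and $q'\in P_z$ satisfy $p'\in D^+(q')$. By Lemma~\ref{lem:intersect}(ii), at time $t^*$ the vertex $\ell_p$ has minimum $x$-coordinate among $L^+(q)(t^*)\supseteq L_w(t^*)$, so node $w$ stores $\ell_p$ at $t^*$. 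For $z$ I claim $r_q$ has maximum $x$-coordinate among $R_z(t^*)$: if some $q'\in P_z$ had $r_{q'}(t^*)_x>r_q(t^*)_x=\ell_p(t^*)_x$, then, since $p\in D^+(q')$ while $r_{q'}(\cdot)_x$ increases and $\ell_p(\cdot)_x$ decreases, the unique time $\hat t$ with $r_{q'}(\hat t)_x=\ell_p(\hat t)_x$ would be strictly before $t^*$; by Observation~\ref{obs:intersect} this $\hat t$ is exactly when $\square_{q'}$ and $\square_p$ start to intersect, contradicting that $\square_p,\square_q$ is the first intersecting pair. Hence $z$ stores $r_q$ at $t^*$. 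The linking certificate between $w$ and $z$ asserts that the vertex stored at $z$ lies strictly left of the vertex stored at $w$, i.e.\ it tracks $\max_{q'\in P_z}r_{q'}(t)_x<\min_{p'\in P_w}\ell_{p'}(t)_x$; applying Observation~\ref{obs:intersect} to each pair $(p',q')$ as above and using that $\square_p,\square_q$ is the first intersecting pair, this inequality holds for all $t<t^*$ and becomes an equality at $t^*$ (witnessed by $p,q$). Therefore a linking certificate between $w$ and $z$ fails exactly at $t^*$.

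I expect the main obstacle to be the bookkeeping in the first step: checking that the three applications of Lemma~\ref{lem:matching_pairs} genuinely compose, i.e.\ that $p$ and $q$ descend into the selected subtree at every layer (they do, since each application returns nodes that lie on the search paths to $p$ and to $q$), and that the chained range containments coincide with the dominance queries defining $Q^L(z)$ and $Q^R(w)$ (with harmless care about the half-open $\gamma$-ranges versus the $x$- and $y$-ranges, since the $\gamma$-coordinates are distinct ranks). The only other place the hypothesis is used essentially is the ``maximum $x$-coordinate'' claim for $z$, which relies on $\square_p,\square_q$ being the first intersecting pair over all of $P$, not merely the first square to meet $\square_q$.
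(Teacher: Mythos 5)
Your proof is correct and follows essentially the same route as the paper's: three applications of Lemma~\ref{lem:matching_pairs}, one per layer, chained together to produce the mutually linked pair $w,z$ with $p\in P_w$ and $q\in P_z$. Your closing argument is in fact somewhat more careful than the paper's, which simply notes $r_q\leq r_b$ and $\ell_a\leq \ell_p$ and leaves implicit why the certificate cannot have failed strictly before $t^*$; your observation that every pair $p'\in P_w$, $q'\in P_z$ satisfies $p'\in D^+(q')$ makes that point explicit.
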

\begin{proof}
  Consider the leaves representing $p$ and $q$ in $T^L$ and $T^R$,
  respectively. By Lemma~\ref{lem:matching_pairs} we get that there is a pair
  of nodes $u \in T^L$ and $u' \in T^R$ that, among other properties, have
  $p \in P_u$ and $q \in P_{u'}$. Hence, we can apply
  Lemma~\ref{lem:matching_pairs} again on the associated trees of $u$ and $u'$,
  giving us nodes $v \in T_u$ and $v' \in T_{u'}$ which again have $p \in P_v$
  and $q \in P_{v'}$. Finally, we apply Lemma~\ref{lem:matching_pairs} once
  more on $X^L_v$ and $X^R_{v'}$ giving us nodes $w \in X^L_v$ and
  $z \in X^R_{v'}$ with $p \in P_w$ and $q \in P_z$. In addition, these three
  applications of Lemma~\ref{lem:matching_pairs} give us two points $(x,y,\gamma)$
  and $(x',y',\gamma')$ such that:
  \begin{itemize}
   \item $P_u$ occurs as a canonical subset representing $P \cap ([x',\infty) \times \R^2)$,
   \item $P_v$ occurs as a canonical subset representing $P_u \cap (\R \times [y',\infty)
     \times \R)$, and
   \item $P_w$ occurs as a canonical subset representing $P_v \cap (\R^2 \times
     [\gamma',\infty))$,
   \end{itemize}
   and such that
 \begin{itemize}
   \item $P_{u'}$ occurs as a canonical subset representing $P \cap ((-\infty,x] \times \R^2)$,
   \item $P_{v'}$ occurs as a canonical subset representing $P_{u'} \cap (\R \times (-\infty,y]
     \times \R)$, and
   \item $P_z$ occurs as a canonical subset representing $P_{v'} \cap (\R^2 \times
     (-\infty,\gamma])$.
  \end{itemize}
  Combining these first three facts, and observing that
  $\overline{m}^z = (x',y',\gamma')$ gives us that $P_w$ occurs as a canonical
  subset representing
  $P \cap ([x',\infty) \times [y',\infty) \times [\gamma',\infty)) =
  D^+((x',y',\gamma'))$, and hence $w \in Q^L(\overline{m}^z)=Q^L(z)$. Analogously,
  combining the latter three facts and $\underline{m}^w=(x,y,\gamma)$ gives us
  $z \in Q^R(w)$. Therefore, $w$ and $z$ have a linking certificate. This
  linking certificate involves the leftmost bottom left vertex $\ell_a$ for
  some point $a \in P_w$ and the rightmost top right vertex $r_b$ for some
  point $b \in P_z$. Since $p \in P_w$ and $q \in P_z$, we have that
  $r_q \leq r_b$ and $\ell_a \leq \ell_p$, and thus we detect their
  intersection at time $t^*$.
\end{proof}

\noindent From Lemma~\ref{lem:detect_intersecting_squares} it follows that we can now
detect the first intersection between a pair of squares $\square_p, \square_q$,
with $p \in D^+(q)$. We define an analogous data structure for when
$p \in D^-(q)$. Following Lemma~\ref{lem:intersect}, the kinetic tournaments
will maintain the vertices with minimum and maximum $y$-coordinate for this
case. We then again link up the kinetic tournament nodes in the two trees
appropriately.


\subparagraph{Space Usage.} Our trees $T^L$ and $T^R$ are range trees in
$\R^3$, and thus use $O(n\log^2 n)$ space. However, it is easy to see that this
is dominated by the space required to store the certificates. For all
$O(n\log^2 n)$ kinetic tournament nodes we store at most $O(\log^3 n)$
certificates (Lemma~\ref{lem:locality}), and thus the total space used by our
data structure is $O(n\log^5 n)$. In Section~\ref{sec:Link_Relations} we will
show that the number of certificates that we maintain is actually only
$O(n (\log n \log\log n)^2)$. This means that our data structure also uses only
$O(n (\log n \log\log n)^2)$ space.

\subsection{Answering Queries}
\label{sub:Queries}
The basic query that our data structure supports is testing if a query square
$\square_q$ currently intersects with a square $\square_p$ in $P$, with
$p \in D^+(q)$. To this end, we simply select the $O(\log^3 n)$ kinetic
tournament nodes whose canonical subsets together represent $D^+(q)$. For each
such a node $w$ we check if the $x$-coordinate of the lower-left vertex
$\ell_p$ stored at that node (which has minimum $x$-coordinate among $L_w$) is
smaller than the $x$-coordinate of $r_q$. If so, the squares intersect. The
correctness of our query algorithm directly follows from
Observation~\ref{obs:intersect}. The total time required for a query is
$O(\log^3 n)$. Similarly, we can test if a given query point $q$ is contained
in a square $\square_p$, with $p \in D^+(q)$. Note that our full data structure
will contain trees analogous to $T^L$ that can be used to check if there is a
square $\square_p \in P$, with $p \in D^-(q)$, or $p$ in one of the other
quadrants defined by $q$, that intersects $\square_q$.

\subsection{Inserting or Deleting a Square}
\label{sub:Updates}

At an insertion or deletion of a square $\square_p$ we proceed in three steps. First, we
update the individual trees $T^L$ and $T^R$, making sure that they once again
represent 3D range trees of all center points in $P$, and that the ternary data structures are, by
themselves, correct kinetic tournaments. For each kinetic tournament node in
$\X^L$ affected by the update, we then query $T^R$ to find a new set of linking
certificates. We update the affected nodes in $\X^R$ analogously. Finally, we
update the global event queue that stores all certificates.
%
%
\begin{lemma}
  \label{lem:updating_the_tree}
  Inserting a square into $T^L$ or deleting a square from $T^L$ takes
  $O(\log^3 n)$ amortized time.
\end{lemma}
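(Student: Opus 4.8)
The plan is to treat $T^L$ as a standard dynamic three-level range tree built on \BBalpha{}~trees and to account separately for the extra bookkeeping forced by the kinetic tournament in its third layer. Recall that a \BBalpha{}~tree handles updates by \emph{partial rebuilding}: whenever the weight-balance condition is violated at a node $\mu$, the entire subtree rooted at $\mu$ is rebuilt from scratch, and the choice of $\alpha$ guarantees this happens only once every $\Omega(|P_\mu|)$ updates that descend into the subtree of $\mu$. I will use this together with two standard facts about a 3D range tree: an inserted or deleted point $p$ lies on the search path of $O(\log n)$ primary nodes, of $O(\log^2 n)$ (primary, secondary) node pairs, and of $O(\log^3 n)$ (primary, secondary, tertiary) node triples; and a 3D range tree on $k$ points, together with the winners and internal certificates of all its tournaments, has total size $O(k\log^2 k)$ and can be built bottom-up in $O(k\log^2 k)$ time (which absorbs the cost of sorting the $k$ points on each coordinate). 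Throughout, I exclude from this lemma the priority-queue operations that insert certificates into or remove them from the global event queue: these are performed in the final step of the overall update procedure and are charged there against the number of certificates created and destroyed.

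First I would bound the \emph{direct} cost of an update, ignoring rebalancing. To insert or delete $p$ we descend the primary tree to $p$'s leaf; at each of the $O(\log n)$ primary nodes $u$ on this path $p$ belongs to $P_u$, so we descend the secondary tree $T_u$ to $p$'s leaf, and at each of the $O(\log n)$ secondary nodes $v$ on that path we insert or delete $p$'s bottom-left vertex $\ell_p$ in the kinetic tournament $X^L_v$. A single leaf insertion or deletion in a kinetic tournament of size $k$ takes $O(\log k)$ time: we rebalance the tournament tree and re-propagate the minimum-$x$ winner up an $O(\log k)$-length path, recomputing the $O(\log k)$ internal certificates whose winner changes. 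This is $O(\log n)$ per tournament, $O(\log^2 n)$ per primary node, hence $O(\log^3 n)$ in total.

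Next I would bound the amortized rebalancing cost, one layer at a time. When a primary node $u$ with $|P_u|=s$ is rebuilt, we rebuild its entire 3D range subtree, including all tournament winners and internal certificates; by the fact quoted above this costs $O(s\log^2 s)$, which the \BBalpha{}~property lets us charge as $O(\log^2 s)=O(\log^2 n)$ to each of the $\Omega(s)$ updates that descended into the subtree of $u$. An analogous argument for a rebuilt secondary node of weight $s$ gives rebuild cost $O(s\log s)$ --- the total size of that secondary subtree and of all tournaments hanging off it --- charged as $O(\log n)$ per descending update; and for a rebuilt tertiary (tournament) node of weight $s$ it gives $O(s)$, charged as $O(1)$ per descending update. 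Since each update descends into $O(\log n)$ primary nodes, $O(\log^2 n)$ secondary nodes, and $O(\log^3 n)$ tertiary nodes, the three contributions to the amortized rebalancing cost are all $O(\log^3 n)$. Adding the direct cost gives the claimed $O(\log^3 n)$ amortized bound, and deletion is entirely symmetric. Note that the linking certificates between $T^L$ and $T^R$ play no role here: they are (re)computed in the subsequent step of the update procedure, not while restructuring $T^L$.

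The part I expect to be delicate is the rebalancing accounting: one has to verify that rebuilding a weight-$s$ subtree costs only $O(s\cdot\polylog s)$ with exactly the right power of $\log$ at each of the three layers, that the kinetic-tournament data (winners and internal certificates) genuinely fits inside that same budget when built bottom-up, and --- the real pitfall --- that the $O(s\log^2 s)$ priority-queue operations triggered by destroying and recreating the internal certificates of a rebuilt subtree are kept out of this lemma and folded into the separately accounted event-queue maintenance, so that they do not inflate the bound to $O(\log^4 n)$.
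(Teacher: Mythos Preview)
Your argument is correct and reaches the same $O(\log^3 n)$ amortized bound, but the rebalancing mechanism differs from the paper's. The paper maintains each \BBalpha{}~layer by \emph{rotations}: after a rotation around an edge $(\mu,\nu)$, only the associated structure at the single node whose canonical subset changed is rebuilt from scratch, and the standard Mehlhorn-style lemma (``if rebuilding the associated structure at $\mu$ costs $O(|P_\mu|\log^c|P_\mu|)$, then all rebalancing over $m$ updates costs $O(m\log^{c+1}n)$'') is invoked once per layer. You instead use \emph{partial rebuilding}: when balance fails at $\mu$, rebuild the whole subtree rooted at $\mu$, and charge the $O(s\,\polylog s)$ rebuild against the $\Omega(s)$ updates that must have passed through $\mu$ since its last rebuild. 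Both routes are standard for weight-balanced trees with secondary structures and give identical amortized costs; your version avoids tracking which node's associated structure survives a rotation, while the paper's version matches the classical \BBalpha{}~literature more directly. Your explicit exclusion of the global event queue (and of the linking certificates) mirrors the paper's parenthetical ``this cost excludes updating the global event queue,'' so the scoping is consistent as well.
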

\begin{proof}
  We use the following standard procedure for updating the three-level
  \BBalpha{}~trees $T^L$ in $O(\log^3 n)$ amortized time. An update
  (insertion or deletion) in a ternary data structure can easily be handled in
  $O(\log n)$ time. When we insert into or delete an element $x$ in a \BBalpha{}
  tree that has associated data structures, we add or remove the leaf that
  contains $x$, rebalance the tree by rotations, and finally add or remove $x$
  from the associated data structures. When we do a left rotation around an edge
  $(\mu, \nu)$ we have to build a new associated data structure for node $\mu$
  from scratch. See Fig.~\ref{fig:rotation}. Right rotations are handled
  analogously. It is well known that if building the associated data structure at
  node $\mu$ takes $O(|P_\mu|\log^c |P_\mu|)$ time, for some $c \geq 0$, then
  the costs of all rebalancing operations in a sequence of $m$ insertions and
  deletions takes a total of $O(m\log^{c+1}n)$ time, where $n$ is the maximum
  size of the tree at any time~\cite{mehlhorn1984data}. We can build a new
  kinetic tournament $X^L_v$ for node $v$ (using the associated data
  structures at its children) in linear time. Note that this cost excludes
  updating the global event queue. Building a new secondary tree $T_v$,
  including its associated kinetic tournaments, takes $O(|T_v|\log|T_v|)$
  time. It then follows that the cost of our rebalancing operations is at most
  $O(m\log^2 n)$. This is dominated by the total number of nodes created and
  deleted, $O(m\log^3 n)$, during these operations. Hence, we can insert or
  delete a point (square) in $T^L$ in $O(\log^3 n)$ amortized time.
\end{proof}

\begin{figure}[b]
  \centering
  \includegraphics{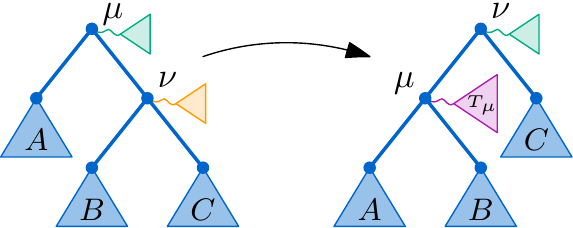}
  \caption{After a left rotation around an edge $(\mu, \nu)$, the associated data
    structure $T_\mu$ of node $\mu$ (pink) has to be rebuilt from scratch as its
    canonical subset has changed. For node $\nu$ we can simply use the old
    associated data of node $\mu$. No other nodes are affected.}
  \label{fig:rotation}
\end{figure}

Analogous to Lemma~\ref{lem:updating_the_tree} we can update $T^R$ in $O(\log^3
n)$ amortized time. Next, we update the linking certificates. We say that a
kinetic tournament node $w$ in $T^L$ is \emph{affected by} an update if (i) the
update added or removed a leaf node in the subtree rooted at $w$, (ii) node $w$
was involved in a tree rotation, or (iii) $w$ occurs in a newly built
associated tree $X^L_v$ (for some node $v$). Let $\X^L_i$ denote the set of
nodes affected by update $i$. Analogously, we define the set of nodes $\X^R_i$
of $T^R$ affected by the update. For each node $w \in \X^L_i$, we query $T^R$ to
find the set of $O(\log^3 n)$ nodes whose canonical subsets represent $Q^R(w)$.
For each node $z$ in this set, we test if we have to add a linking certificate
between $w$ and $z$. As we show next, this takes constant time for each node
$z$, and thus $O(\sum_i |\X^L_i| \log^3 n)$ time in total, for all nodes $w$. We
update the linking certificates for all nodes in $\X^R_i$ analogously.

We have to add a link between a node $z \in Q^R(w)$ and $w$ if and only if we
also have $w \in Q^L(z)$. We test this as follows. Let $v$ be the node whose
associated tree $X^L_v$ contains $w$, and let $u$ be the node in $T^L$ whose
associated tree contains $v$. We have that $w \in Q^L(z)$ if and only if
$u \in C(T^L,[\overline{m}^z_x,\infty))$,
$v \in C(T_u,[\overline{m}^z_y,\infty))$, and
$w \in C(X^L_v,[\overline{m}^z_\gamma,\infty))$. We can test each of these
conditions in constant time:
\begin{observation}
  \label{obs:ancestor_testing}
  Let $q$ be a query point in $\R^1$, let $w$ be a node in a binary search tree
  $T$, and let $x_p = \min P_p$ of the parent $p$ of $w$ in $T$, or
  $x_p = -\infty$ if no such node exists. We have that $w \in C(T,[q,\infty))$ if
  and only if $q \leq \min P_w$ and $q > x_p$. 
\end{observation}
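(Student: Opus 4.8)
The plan is to derive the observation from the standard description of the canonical node set of a one-dimensional range query, and then to specialise that description to a half-infinite interval. Recall that, by definition, $C(T,I)$ is the minimal family of nodes whose canonical subsets together represent exactly $S \cap I$. The first step is to establish the well-known characterisation: a node $w$ belongs to $C(T,I)$ if and only if $P_w \subseteq I$ and, in addition, either $w$ is the root of $T$ or the canonical subset $P_{p}$ of its parent $p$ satisfies $P_{p} \not\subseteq I$. Put differently, $C(T,I)$ consists exactly of the roots of the maximal subtrees of $T$ whose canonical subsets are entirely contained in~$I$.

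To prove this characterisation I would verify both that the maximal in-range subtrees form a valid representing family and that every representing family contains all of their roots. For the first claim: two such subtrees cannot be nested, since if one root were a strict ancestor of another then the parent of the lower root would still have an in-range canonical subset, contradicting maximality; by laminarity of the subtrees of a tree they are therefore pairwise disjoint, and walking from the leaf storing any $x \in S \cap I$ up to the highest ancestor whose canonical subset is still contained in $I$ produces a maximal in-range subtree containing $x$, so these subtrees also cover all of $S \cap I$. For the minimality claim: in any representing family $C$, every node $c \in C$ must satisfy $P_c \subseteq I$ (otherwise an element outside $I$ would be represented), and then a laminarity argument forces every maximal in-range node $w$ into $C$ itself --- a node of $C$ covering an element of $P_w$ can be neither a strict descendant of $w$ (those are not maximal) nor a strict ancestor (that would contradict maximality of $w$), hence it must be $w$. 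Thus the maximal in-range subtrees are contained in every representing family and therefore constitute the unique minimal one.

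It then remains to specialise to $I = [q,\infty)$. Since each canonical subset $P_\mu$ is a nonempty block of consecutive keys, $P_\mu \subseteq [q,\infty)$ holds if and only if $q \le \min P_\mu$. Applying this to $w$ itself gives the first condition, $q \le \min P_w$. Applying it to the parent $p$ of $w$ gives $P_p \subseteq [q,\infty) \iff q \le \min P_p = x_p$, equivalently $P_p \not\subseteq [q,\infty) \iff q > x_p$. When $w$ is the root of $T$, the parent condition from the characterisation is vacuous, and the convention $x_p = -\infty$ makes $q > x_p$ hold automatically, so both cases are captured uniformly. Combining the two conditions yields that $w \in C(T,[q,\infty))$ if and only if $q \le \min P_w$ and $q > x_p$, as claimed; note that each of $\min P_w$ and $x_p$ can be precomputed and stored at $w$, so the test indeed runs in constant time.

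The only genuinely delicate point is the minimality half of the characterisation: one must rule out representing families that either use a node whose canonical subset spills outside $I$ or that chop a maximal in-range subtree into several smaller canonical pieces. Once laminarity of tree subtrees is invoked to exclude both, the remainder is a routine translation between ``the subtree under $w$ lies in the half-line $[q,\infty)$'' and ``the smallest key under $w$ is at least $q$''.
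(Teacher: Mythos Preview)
The paper states this result as an observation without proof, so there is nothing to compare against; your write-up is already more detailed than anything the paper provides. Your overall strategy --- characterise $C(T,I)$ as the set of roots of the maximal in-range subtrees and then specialise to $I=[q,\infty)$ --- is exactly the right one, and the specialisation step is clean and correct.

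There is, however, a slip in your minimality argument. You claim that every representing family $C$ must contain every maximal in-range node $w$, arguing that a node $c\in C$ covering an element of $P_w$ ``can be neither a strict descendant of $w$ (those are not maximal) nor a strict ancestor''. The ancestor case is fine, but the descendant case does not go through: nothing forbids a representing family from replacing $w$ by its two children (or by finer descendants), yielding a family that still represents $S\cap I$ exactly yet omits $w$. So the maximal in-range nodes are \emph{not} contained in every representing family. What is true --- and what suffices --- is that they form the unique family of \emph{minimum cardinality}: every $c$ with $P_c\subseteq I$ lies inside a unique maximal in-range subtree, so covering all of $S\cap I$ requires at least one node per maximal subtree; and if some subtree is represented by a single proper descendant of its root, part of that subtree remains uncovered. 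Once you replace the faulty sentence by this counting argument (or simply take the standard algorithmic definition of $C(T,I)$ as the output of a range query, i.e.\ the maximal in-range nodes), the rest of your proof stands.
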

Finally, we delete all certificates involving no longer existing nodes from our
global event queue, and replace them by all newly created certificates. This
takes $O(\log n)$ time per certificate. We charge the cost of deleting a
certificate to when it gets created. Since every node $w$ affected creates
at most $O(\log^3 n)$ new certificates, all that remains is to bound the total number
of affected nodes. We can show this using basically the same argument as we
used to bound the update time. This leads to the following result.


\begin{lemma}
  \label{lem:update}
  Inserting a disjoint square into $P$, or deleting a square from $P$ takes
  $O(\log^7 n)$ amortized time.
\end{lemma}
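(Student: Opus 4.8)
The plan is to bound the cost of the three-step update procedure by charging everything to the kinetic tournament nodes that actually change. By Lemma~\ref{lem:updating_the_tree} and its analogue for $T^R$, the first step --- restoring $T^L$ and $T^R$ as three-layered \BBalpha{}~trees and as correct kinetic tournaments --- costs $O(\log^3 n)$ amortized time. More usefully, the proof of that lemma already establishes that over a sequence of $m$ updates the total number of kinetic tournament nodes that are created or destroyed --- by inserting or removing leaves, by rotations, or by rebuilding associated structures after a rotation --- is $O(m\log^3 n)$. Since every node in $\X^L_i$ (and $\X^R_i$) either has a leaf added or removed in its subtree (so it lies on one of the nested search paths to the updated point, of which there are $O(\log^3 n)$), or was involved in a rotation, or occurs in a freshly built associated tree, we obtain $\sum_i\bigl(|\X^L_i| + |\X^R_i|\bigr) = O(m\log^3 n)$.

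Next I would observe that a linking certificate only needs to be revisited when one of its two endpoints is affected: if neither $w \in \X^L$ nor $z \in \X^R$ is affected by update $i$, then $P_w$, $P_z$, the associated query points $\underline{m}^w$ and $\overline{m}^z$, and all canonical-subset membership relations among the relevant nodes are unchanged, so the certificate $(w,z)$ is still present and still correct. Hence the work on linking certificates during update $i$ is proportional to $|\X^L_i| + |\X^R_i|$ times a per-node cost. For an affected node $w \in \X^L$ this per-node cost is $O(\log^4 n)$: we query $T^R$ for the $O(\log^3 n)$ canonical nodes representing $Q^R(w)$ in $O(\log^3 n)$ time, decide for each such node $z$ whether $w \in Q^L(z)$ in $O(1)$ time via three applications of Observation~\ref{obs:ancestor_testing}, and insert the $O(\log^3 n)$ resulting new certificates into the global event queue at $O(\log n)$ each; the $O(\log^3 n)$ now-obsolete certificates incident to $w$ (Lemma~\ref{lem:locality}) are deleted from the queue in $O(\log n)$ each, a cost we charge to their creation. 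Affected nodes of $T^R$ are handled symmetrically, and the constant number of symmetric copies of the structure (the $D^-$ variant and the four quadrants) only changes constant factors.

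Putting the pieces together, a sequence of $m$ updates costs $O(m\log^3 n)$ for the tree maintenance plus $O\bigl(\sum_i(|\X^L_i|+|\X^R_i|)\cdot\log^4 n\bigr) = O(m\log^3 n\cdot\log^4 n) = O(m\log^7 n)$ for recomputing and storing the linking certificates, hence $O(\log^7 n)$ amortized per update; correctness of the updated structure follows from Lemma~\ref{lem:detect_intersecting_squares}.

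The step I expect to be the main obstacle is making $\sum_i(|\X^L_i|+|\X^R_i|) = O(m\log^3 n)$ airtight. It relies on the standard weight-balanced amortization --- a rotation at a node of weight $s$ is preceded by $\Omega(s)$ updates in its subtree --- cascaded through all three layers: a primary rotation rebuilds a whole secondary tree with its $\Theta(s\log s)$ tournament nodes, a secondary rotation rebuilds a tournament of $\Theta(s)$ nodes, and a rebalancing step inside a tournament touches $O(\log n)$ nodes; one must check that these per-layer rebuild costs telescope (to $O(m\log^2 n)$, $O(m\log^2 n)$, and $O(m\log^3 n)$, respectively) and that every affected tournament node is genuinely captured by one of the three cases --- exactly the bookkeeping already carried out for Lemma~\ref{lem:updating_the_tree}.
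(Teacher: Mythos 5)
Your proposal is correct and follows essentially the same approach as the paper's proof: bound the total number of affected kinetic-tournament nodes over a sequence of $m$ updates by $O(m\log^3 n)$ via the weight-balanced rebuild amortization from Lemma~\ref{lem:updating_the_tree}, and then charge $O(\log^3 n)$ new linking certificates (each with $O(\log n)$ event-queue cost) to each affected node, giving $O(m\log^7 n)$ total and $O(\log^7 n)$ amortized. The paper simply phrases the per-node cost as ``$O(m\log^6 n)$ to compute certificates, $O(m\log^7 n)$ to update the queue'' rather than your single $O(\log^4 n)$ per affected node, but the accounting is the same.
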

\begin{proof}
  An update visits at most $O(\log^3 n)$ nodes itself (i.e.~leaf nodes and
  nodes on the search path). All other affected nodes occur as newly built
  trees due to rebalancing operations. As in Lemma~\ref{lem:updating_the_tree},
  the total number of nodes created due to rotations in a sequence of $m$
  updates is $O(m\log^2 n)$. It follows that the total number of affected nodes
  in such a sequence is $O(m\log^3 n)$. Therefore, we create $O(m\log^6 n)$
  linking certificates in total, and we can compute them in $O(m\log^6 n)$
  time. Updating the event global queue therefore takes $O(m\log^7 n)$ time.
\end{proof}

\subsection{Running the Simulation}
\label{sub:Running_the_Simulation}

All that remains is to analyze the number of events processed. We show that
in a sequence of $m$ operations, our data structure processes at most
$O(m\alpha(n)\log^3 n)$ events. This leads to the following result.
\begin{theorem}
  \label{thm:data_structure}
  We can maintain a set $P$ of $n$ disjoint growing squares in a fully dynamic
  data structure such that we can detect the first time that a square
  $\square_q$ intersects with a square $\square_p$, with $p \in D^+(q)$. Our
  data structure uses $O(n (\log n \log\log n)^2)$ space, supports updates in $O(\log^7 n)$
  amortized time, and queries in $O(\log^3 n)$ time. For a sequence of $m$
  operations, the structure processes a total of $O(m\alpha(n)\log^3 n)$ events
  in a total of $O(m\alpha(n)\log^7 n)$ time.
\end{theorem}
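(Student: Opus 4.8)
The plan is to read off the space, update-time, and query-time parts of the statement from results already established, and to spend the real effort on counting events and the time to process them. Correctness --- that the structure reports the first pair $\square_p,\square_q$ with $p\in D^+(q)$ that intersects --- is exactly Lemma~\ref{lem:detect_intersecting_squares} combined with the query procedure of Section~\ref{sub:Queries}; the $O(\log^7 n)$ amortized update bound is Lemma~\ref{lem:update}; the $O(\log^3 n)$ query bound was argued in Section~\ref{sub:Queries}; and the $O(n(\log n\log\log n)^2)$ space bound I would obtain by invoking the refined bound on the number of linking certificates proved later in Section~\ref{sec:Link_Relations}. So the crux is the last sentence of the theorem: over a sequence of $m$ operations the structure processes $O(m\alpha(n)\log^3 n)$ events in $O(m\alpha(n)\log^7 n)$ total time. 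I would run the simulation with a global priority queue of all certificates keyed on failure time, repeatedly extracting the earliest-failing one.

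I would split events by certificate type. First, the tournament-internal certificates: each third-level structure $X^L_v$ (and $X^R_v$) is a kinetic tournament on $x$-coordinates of corners moving along slope-one lines, so each tracked value is linear in $t$. Here I would cite the dynamic-kinetic-tournament analysis of Agarwal~\etal~\cite{agarwal2008kinetictournament}: a tournament subjected to $k$ insertions and deletions processes $O(k\alpha(k)\log k)$ events --- the inverse-Ackermann factor appearing because a point that is inserted and later deleted contributes only a \emph{segment}, not a full line, so a node's winner function is a lower envelope of segments with $\lambda_3$-complexity. It then remains to bound the total number of tournament updates over $m$ operations on $P$: an update touches $O(\log^2 n)$ tournaments directly, one leaf each, and every other tournament update comes from rebuilding secondary trees and tournaments during the rebalancing of $T^L$ and $T^R$, which by the same accounting as in Lemma~\ref{lem:updating_the_tree} totals $O(m\log^2 n)$ over the whole sequence. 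Distributing these $O(m\log^2 n)$ tournament updates among the individual tournaments and summing $\sum_i O(k_i\alpha(k_i)\log k_i) = O(\alpha(n)\log n\sum_i k_i)$ gives $O(m\alpha(n)\log^3 n)$ tournament events.

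Second, the linking certificates. By construction, and by the reasoning in the proof of Lemma~\ref{lem:detect_intersecting_squares}, whenever a linking certificate fails the two corners it stores belong to squares $\square_p,\square_q$ with $p\in D^+(q)$ that genuinely begin to intersect at that moment; since we always pop the earliest event, this is the first intersection, so it forces a merge --- two deletions and one insertion in $P$. Each inserted square can be consumed by at most one merge, so there are only $O(m)$ linking-certificate failures, and the event total stays $O(m\alpha(n)\log^3 n)$. For the time bound I would charge as follows: a tournament event at a node $w$ changes its winner, and by Lemma~\ref{lem:locality} $w$ sits in $O(\log^3 n)$ linking certificates, each of which needs its failure time recomputed and updated in the queue at cost $O(\log n)$, so $O(\log^4 n)$ per tournament event (fixing the constantly many affected tournament certificates is dominated by this); a linking-certificate failure is an update, handled in $O(\log^7 n)$ amortized time by Lemma~\ref{lem:update}. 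This yields $O(m\alpha(n)\log^3 n)\cdot O(\log^4 n) + O(m)\cdot O(\log^7 n) = O(m\alpha(n)\log^7 n)$, which together with the $O(m\log^7 n)$ cost of performing the $m$ updates themselves gives the claimed total.

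I expect the main obstacle to be exactly the event count: reducing the tournament-internal events to a single clean $O(m\log^2 n)$ bound on the number of tournament updates --- reconciling the direct updates with the cascade of secondary-tree and tournament rebuilds triggered by rotations --- and then feeding that correctly into the Davenport--Schinzel machinery so that precisely the $\alpha(n)$ factor (and no extra logarithms) emerges. A secondary subtlety is verifying that a linking-certificate failure always witnesses a true intersection, so that it can be amortized against a merge rather than becoming an independent event, and that propagating a winner change to all incident linking certificates really costs only $O(\log^4 n)$.
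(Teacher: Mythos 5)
Your proposal is correct and follows essentially the same route as the paper: it reads off correctness, space, query, and update bounds from the earlier lemmas, bounds linking-certificate failures by $m$ because each one witnesses a genuine intersection and hence triggers an update, bounds tournament events via the Agarwal~et~al.\ Davenport--Schinzel analysis combined with the $O(m\log^2 n)$ rebalancing accounting, and charges $O(\log^4 n)$ per tournament event to refresh the $O(\log^3 n)$ incident linking certificates. The paper phrases the tournament-event count as $\sum_w O(|P^*_w|\alpha(n))$ over kinetic tournament nodes rather than summing the per-tournament $O(k\alpha(k)\log k)$ bound over tournaments, but these are equivalent accountings of the same argument.
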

\begin{proof}
  We argued the bounds on the space, the query, and the update times
  before. All that remains is to bound the number of events processed, and the
  time to do so.

  We start by the observation that each failure of a linking
  certificate produces an intersection, and thus a subsequent update. It thus
  follows that the number of such events is at most $m$.

  To bound the number of events created by the tournament trees we extend the
  argument of Agarwal \etal~\cite{agarwal2008kinetictournament}. For any
  kinetic tournament node $w$ in $T^L$, the minimum $x$-coordinate corresponds
  to a lower envelope of line-segments in the $t,x$-space. This envelope has
  complexity $O(|P^*_w|\alpha(|P^*_w|))=O(|P^*_w|\alpha(n))$, where $P^*_w$ is
  the multiset of points that ever occur in $P_w$, i.e.~that are stored in a
  leaf of the subtree rooted at $w$ at some time $t$. Hence, the number of
  tournament events involving node $w$ is also at most
  $O(|P^*_w|\alpha(n))$. It then follows that the total number of events is
  proportional to the size of these sets $P^*_w$, over all $w$ in our tree. As
  in Lemma~\ref{lem:updating_the_tree}, every update directly contributes one
  point to $O(\log^3 n)$ nodes. The remaining contribution is due to
  rebalancing operations, and this cost is again bounded by $O(m\log^2
  n)$. Thus, the total number of events processed is $O(m\alpha(n)\log^3 n)$.

  At every event, we have to update the $O(\log^3 n)$ linking certificates of
  $w$. This can be done in $O(\log^4 n)$ time (including the time to update the
  global event queue). Thus, the total time for processing all kinetic tournament
  events in $T^L$ is
  $O(m\alpha(n)\log^7 n)$. The analysis for the kinetic tournament nodes $z$ in
  $T^R$ is analogous.
\end{proof}
To simulate the process of growing the squares in $P$, we now maintain eight
copies of the data structure from Theorem~\ref{thm:data_structure}: two data
structures for each quadrant (one for $D^+$, the other for $D^-$). We thus
obtain the following result.
\begin{theorem}
  \label{thm:main_datastructure}
  We can maintain a set $P$ of $n$ disjoint growing squares in a fully dynamic
  data structure such that we can detect the first time that two squares in $P$
  intersect. Our data structure uses $O(n (\log n \log\log n)^2)$ space, supports updates in
  $O(\log^7 n)$ amortized time, and queries in $O(\log^3 n)$ time. For a
  sequence of $m$ operations, the structure processes $O(m\alpha(n)\log^3 n)$
  events in a total of $O(m\alpha(n)\log^7 n)$ time.
\end{theorem}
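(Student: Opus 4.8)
The plan is to obtain Theorem~\ref{thm:main_datastructure} by running eight instances of the data structure of Theorem~\ref{thm:data_structure} in parallel and merging their outputs. First I would pin down the covering argument. Since all $x$- and $y$-coordinates are distinct, for any two points $p,q \in P$ the point $p$ lies in exactly one of the four open quadrants of $q$, and the north-east quadrant is precisely the case $p \in D(q)$. Reflecting the plane about a horizontal line, a vertical line, or both, maps the other three quadrants onto the dominance relation, so four reflected copies of the ``$D(q)$'' machinery cover every ordered pair. Within a fixed quadrant, Lemma~\ref{lem:intersect} splits $D(q)$ into $D^+(q)$ and $D^-(q)$ according to whether the intersection is first witnessed through the right edge or the top edge of $\square_q$; Theorem~\ref{thm:data_structure} supplies a structure for the $D^+$ case, and the analogous structure (maintaining extreme $y$-coordinates instead of $x$-coordinates) handles the $D^-$ case. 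Hence two structures per quadrant, eight in total, together certify disjointness of \emph{all} pairs, and any pair $\square_p,\square_q$ that starts to intersect at time $t^*$ causes a linking certificate to fail at time $t^*$ in the copy responsible for its quadrant-and-edge type, by Lemma~\ref{lem:detect_intersecting_squares} and its analogues for the other copies.

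Next I would argue that the resource bounds survive the eightfold duplication. Each copy uses $O(n(\log n\log\log n)^2)$ space by Theorem~\ref{thm:data_structure} (with the certificate-count refinement of Section~\ref{sec:Link_Relations}), so the total is still $O(n(\log n\log\log n)^2)$. An insertion or deletion is forwarded to all eight copies; each handles it in $O(\log^7 n)$ amortized time, and the only shared object is a single global event queue holding the union of the eight certificate sets, whose size is polynomial in $n$, so each queue operation costs $O(\log n)$ and is dominated by the per-copy update cost. A query square (or query point) is tested against all eight copies and the answers are combined by disjunction, in $O(\log^3 n)$ time total. For the simulation, the number of events is the sum of the per-copy bounds, i.e.\ $O(m\alpha(n)\log^3 n)$; each event triggers the update work analyzed in Theorem~\ref{thm:data_structure} plus one extract-min from the global queue, costing $O(\log^7 n)$, for a grand total of $O(m\alpha(n)\log^7 n)$. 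Detecting ``the first time two squares intersect'' then amounts to reporting the minimum failure time over the eight queues, which the global queue provides.

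I do not expect a serious obstacle: the work is essentially bookkeeping. The one place that needs care is the reflection step — making sure that after reflecting the plane the roles of the ``top edge'' and ``right edge'' of $\square_q$ are swapped or preserved consistently, so that each reflected quadrant is genuinely an instance of the $D^+$ or $D^-$ structure of Theorem~\ref{thm:data_structure} rather than a new case requiring its own analysis. Once that correspondence is spelled out, correctness of the combined structure follows from Lemma~\ref{lem:detect_intersecting_squares} applied to each copy, and the complexity bounds follow by multiplying the bounds of Theorem~\ref{thm:data_structure} by the constant $8$ and absorbing the $O(\log n)$ cost of the shared event queue.
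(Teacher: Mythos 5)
Your proposal matches the paper's argument exactly: the paper also obtains this theorem by maintaining eight copies of the structure from Theorem~\ref{thm:data_structure} (two per quadrant, one for $D^+$ and one for $D^-$) and absorbing the constant-factor overhead into the stated bounds. Your added care about the reflection step and the shared event queue is sound bookkeeping that the paper leaves implicit.
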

And thus we obtain the following solution to the agglomerative glyph
clustering problem.
\begin{theorem}
  \label{thm:agglomerative_glyph_clustering}
  Given a set of $n$ initial square glyphs $P$, we can compute an agglomerative
  clustering of the squares in $P$ in $O(n\alpha(n)\log^7 n)$ time using
  $O(n (\log n \log\log n)^2)$ space.
\end{theorem}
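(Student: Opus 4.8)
The plan is to run a discrete event simulation of the growing process, using the kinetic data structure of Theorem~\ref{thm:main_datastructure} to repeatedly find the next time at which two squares in the current set $P$ intersect. Starting from the initial set of $n$ glyphs at $t=0$ (where all squares are disjoint by assumption), we initialize the data structure with $n$ insertions. We then repeat the following loop: query the data structure to obtain the first intersection event, say between $\square_p$ and $\square_q$ at time $t^*$; delete both $p$ and $q$ from the data structure; create the merged glyph $z = \alpha p + (1-\alpha)q$ with $\alpha = p_w/(p_w+q_w)$ and weight $z_w = p_w + q_w$; and insert $\square_z$ into the data structure. Each merge reduces the number of glyphs by one, so after at most $n-1$ merges all remaining glyphs are disjoint forever and we stop. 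The sequence of merges, recorded as we go, is the agglomerative clustering.

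The first thing I would check carefully is correctness, i.e.\ that this simulation really produces a valid agglomerative clustering. The key point is that between consecutive events the set $P$ changes only at the event times, and Theorem~\ref{thm:main_datastructure} guarantees that the structure reports the \emph{first} time any two squares in the \emph{current} $P$ intersect. So as long as we process events in nondecreasing time order and always re-query after each update, we never miss an intersection and never merge a pair out of order. One subtlety is the newly inserted merged square $\square_z$: at its insertion time $t^*$ we must argue it is disjoint from every other square currently in $P$. This holds because $\square_p$ and $\square_q$ were the \emph{first} pair to intersect, so just before $t^*$ all pairs in the old $P$ were disjoint; $\square_z$ at time $t^*$ is contained in the union (indeed roughly the bounding box) of $\square_p(t^*)$ and $\square_q(t^*)$ which had only just touched, so it does not yet overlap anything else. (If one insisted on absolute rigor here one would note that $\square_z$ could be inserted an infinitesimal moment later, or observe that the weighted centroid placement keeps $\square_z$ inside the merged region; in any case a vanishingly small perturbation, consistent with the genericity assumption on coordinates, removes the degenerate boundary contact.) This lets us legitimately call \texttt{insert} which requires a disjoint square.

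For the running time, the simulation performs a sequence of $m$ operations on the data structure: $n$ initial insertions, and then for each of the at most $n-1$ merges two deletions and one insertion, so $m = O(n)$ updates in total. By Theorem~\ref{thm:main_datastructure}, handling this sequence of $m=O(n)$ operations --- including processing all kinetic events triggered along the way --- takes $O(m\,\alpha(n)\log^7 n) = O(n\,\alpha(n)\log^7 n)$ time, and each query for the next intersection event costs $O(\log^3 n)$, contributing only $O(n\log^3 n)$ over all $O(n)$ queries, which is absorbed. The overhead of bookkeeping the merge tree (recording for each merge its two children and the new node) is $O(1)$ per merge, i.e.\ $O(n)$ total. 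The space is dominated by the data structure itself, $O(n(\log n\log\log n)^2)$, plus $O(n)$ for the output tree. Summing these gives the claimed $O(n\,\alpha(n)\log^7 n)$ time and $O(n(\log n\log\log n)^2)$ space.

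I expect the main obstacle to be nothing algorithmically deep but rather the correctness argument around reinsertion of merged glyphs and the interaction with the genericity (distinct $x$- and $y$-coordinate) assumption: a merge event is by definition a degenerate configuration (two squares touching), and one must be careful that performing the delete--delete--insert triple at that exact instant does not violate the data structure's precondition that inserted squares be disjoint from the current set, nor create simultaneous events that the event-driven simulation mishandles. The clean way to dispatch this is to argue, as above, that the merged square lies strictly inside the just-merged region and hence is disjoint from all others, and to note that ties among event times can be broken consistently (or ruled out by a symbolic perturbation) so that events are processed one at a time. Everything else is a routine accounting of $O(n)$ updates against Theorem~\ref{thm:main_datastructure}.
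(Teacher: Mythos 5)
Your overall simulation approach --- $n$ initial insertions, then for each merge a delete--delete--insert triple, at most $n-1$ merges, so $m=O(n)$ operations fed to Theorem~\ref{thm:main_datastructure}, plus $O(\log^3 n)$ per query --- is exactly what the paper intends (it states the theorem as an immediate corollary of Theorem~\ref{thm:main_datastructure} without a written proof), and your time and space accounting is right.

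However, your disjointness argument for the merged glyph contains a concrete error. You assert that $\square_z(t^*)$ is contained in the union, ``indeed roughly the bounding box,'' of $\square_p(t^*)$ and $\square_q(t^*)$, and therefore cannot overlap any third square. That is false. The merged square has side $t^*(p_w+q_w)$, the \emph{sum} of the two old sides, so along the axis orthogonal to the touching edge $\square_z(t^*)$ generally extends strictly beyond the bounding box of the two old squares. For example, with $p=(0,0)$, $q=(0,1)$, $p_w=q_w=1$, the two unit squares touch at $t^*=1$ with union $[-\frac{1}{2},\frac{1}{2}]\times[-\frac{1}{2},\frac{3}{2}]$, whereas $\square_z(1)=[-1,1]\times[-\frac{1}{2},\frac{3}{2}]$; a small square centred near $(1,\frac{1}{2})$ is disjoint from both $\square_p(1)$ and $\square_q(1)$ yet lies strictly inside $\square_z(1)$. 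Symbolic perturbation does not rescue this --- it is a genuine overlap of positive area, not a degenerate boundary tangency. The fix is the one the introduction already hints at with its description of recursive aggregation: before inserting $\square_z$, perform a query; if $\square_z$ overlaps some $\square_r$, merge again immediately, and repeat until the result is disjoint. Each cascading merge still decreases the glyph count by one, so the total number of data-structure operations remains $O(n)$ and the claimed bounds go through unchanged. The theorem is correct and so is your bookkeeping; only your justification of the \texttt{insert} precondition must be replaced by this query-then-cascade step.
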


\section{Efficient Representation of Dominance Relations}
\label{sec:Link_Relations}

The linking certificates of our data structure actually comprise an efficient representation of all dominance relations between two point sets. We therefore think that this representation, and in particular the tighter analysis in this section, is of independent interest.

Let $R$ and $B$ be two point sets in $\R^d$ with $|R| = n$ and $|B| = m$, and let $T^R$ and $T^B$ be range trees built on $R$ and $B$, respectively. We assume that each layer of $T^R$ and $T^B$ consists of a \BBalpha{}-tree, although similar analyses can be performed for other types of balanced binary search trees. By definition, every node $u$ on the lowest layer of $T^R$ or $T^B$ has an associated $d$-dimensional range $Q_u$ (the hyper-box, not the subset of points). For a node $u \in T^R$, we consider the subset of points in $B$ that dominate all points in $Q_u$, which can be comprised of $O(\log^d m)$ canonical subsets of $B$, represented by nodes in $T^B$. Similarly, for a node $v \in T^B$, we consider the subset of points in $R$ that are dominated by all points in $Q_v$, which can be comprised of $O(\log^d n)$ canonical subsets of $R$, represented by nodes in $T^R$. We now link a node $u \in T^R$ and a node $v \in T^B$ if and only if $v$ represents such a canonical subset for $u$ and vice versa. By repeatedly applying Lemma~\ref{lem:matching_pairs} for each dimension, it can easily be shown that these links represent all dominance relations between $R$ and $B$.

As a $d$-dimensional range tree consists of $O(n \log^{d-1} n)$ nodes, a trivial bound on the number of links is $O(m \log^{2d - 1} n)$ (assuming $n \geq m$). Below we show that the number of links can be bounded by $O(n (\log n \log\log n)^{d-1})$. We first consider the case for $d=1$.

%

\subsection{Analyzing the Number of Links in 1D}
Let $R$ and $B$ be point sets in $\R$ with $|R| = n$, $|B| = m$, and $n \geq m$. Now, every associated range of a node $u$ in $T^R$ or $T^B$ is an interval $I_u$. We can extend the interval to infinity in one direction; to the left for $u \in T^R$, and to the right for $u \in T^B$. For analysis purposes we construct another range tree $T$ on $R \cup B$, where $T$ is not a \BBalpha{}-tree, but instead a perfectly balanced tree with height $\lceil\log(n + m)\rceil$. For convenience we assume that the associated intervals of $T$ are slightly expanded so that all points in $R \cup B$ are always interior to the associated intervals. We associate a node $u$ in $T^R$ or $T^B$ with a node $v$ in $T$ if the endpoint of $I_u$ is contained in the associated interval $I_v$ of $v$.
%
%
%
\begin{observation}
  \label{obs:once_per_level}
  Every node of $T^R$ or $T^B$ is associated with at most one node per level
  of~$T$.
\end{observation}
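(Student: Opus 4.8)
The plan is to reduce the claim to the elementary fact that, at any fixed level of the perfectly balanced tree $T$, the associated intervals of the nodes at that level are pairwise disjoint; since ``the endpoint of $I_u$'' is a single point, it can lie in at most one of them.

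First I would pin down what ``the endpoint of $I_u$'' means. For a node $u \in T^R$ the interval $I_u$ has been extended to a left-unbounded interval $(-\infty, e_u]$, so it has exactly one finite endpoint $e_u$, namely the right endpoint of the original canonical interval of $u$; symmetrically, for $u \in T^B$ the relevant endpoint is the left endpoint of its right-unbounded interval. Thus each node $u$ determines a single real number $e_u$, and $u$ is associated with a node $v \in T$ precisely when $e_u \in I_v$. I would also note that $e_u$ always lies inside the root interval of $T$, since $e_u$ is squeezed between $\min(R \cup B)$ and $\max(R \cup B)$, so in fact $u$ is associated with \emph{exactly} one node per level; ``at most one'' is all the observation asks for.

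The key step is to show that for every level $\ell$ of $T$ the intervals $\{I_v : v \text{ a node at level } \ell\}$ are pairwise disjoint. This is an intrinsic property of $T$: its construction gives every internal node an associated interval equal to the union of the associated intervals of its two children, and the slight expansion of the intervals of $T$ is chosen so that siblings remain separated (each point of $R \cup B$ is pushed to the interior of the leaf interval it belongs to, and the shared boundaries between consecutive pieces stay strictly between data points). For two distinct nodes $v, v'$ at level $\ell$, let $a$ be their lowest common ancestor; since $v$ and $v'$ have the same depth and are distinct, they lie in the two different subtrees of $a$, so $I_v$ and $I_{v'}$ are contained in the two disjoint child intervals of $a$, whence $I_v \cap I_{v'} = \emptyset$.

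Finally I would conclude: fixing $u$ and a level $\ell$, the point $e_u$ lies in at most one of the pairwise-disjoint intervals $I_v$ ranging over nodes $v$ at level $\ell$, so $u$ is associated with at most one node at level $\ell$; as $\ell$ was arbitrary, this proves the observation, and by symmetry the same argument handles both $T^R$ and $T^B$. The only place that needs care is the middle step --- checking that the intervals within a single level of $T$ are genuinely disjoint despite the ``slight expansion'' --- but this is exactly what that expansion convention is set up to guarantee, and the lowest-common-ancestor argument above makes it rigorous; everything else is immediate.
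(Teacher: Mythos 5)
Your proof is correct, and it is essentially the only reasonable argument: the paper states this as an unproved observation, and the implicit reasoning is exactly what you spelled out --- each node $u$ contributes a single endpoint $e_u$, the intervals $I_v$ at any fixed level of $T$ are pairwise disjoint (sibling subtrees of the LCA, as you argue), and a single point lies in at most one member of a pairwise-disjoint family. Your care about the ``slight expansion'' is appropriate: since $e_u$ is a data point of $R \cup B$ and the expansion is set up precisely so that data points are interior to leaf intervals (equivalently, boundaries in $T$ fall strictly between data points), $e_u$ cannot land on a shared boundary, so the LCA argument goes through.
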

For two intervals $I_u = (-\infty, a]$ and $I_v = [b, \infty)$, corresponding to
a node $u \in T^R$ and a node $v \in T^B$, let $[a, b]$ be the
\emph{spanning interval} of $u$ and $v$. We now want to charge spanning
intervals of links to nodes of $T$.
We charge a spanning interval $I_{uv} = [a, b]$ to a node $w$ of $T$ if and only if $[a, b]$
is a subset of $I_w$, and $[a, b]$ is cut by the splitting coordinate of $w$.
Clearly, every spanning interval can be charged to exactly one node of $T$.

Now, for a node $u$ of $T$, let $h_R(u)$ be the height of the highest node of
$T^R$ associated with $u$, and let $h_B(u)$ be the height of the highest node of
$T^B$ associated with $u$.
\begin{lemma}
  \label{lem:num_spanning_intervals_node}
  The number of spanning intervals charged to a node $u$ of $T$ is $O(h_R(u) \cdot h_B(u))$.
\end{lemma}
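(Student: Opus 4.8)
The plan is to fix the node $u$ of $T$ and its splitting coordinate $s$, and to count the pairs $(x, v)$ with $x \in T^R$, $v \in T^B$ that are linked and whose spanning interval $[a,b]$ is charged to $u$ — i.e.\ satisfies $a < s < b$ and $[a,b] \subseteq I_u$. First I would observe that if such a pair is charged to $u$, then the left endpoint $a$ of $I_x$ lies in $I_u$ but strictly left of $s$, hence in the associated interval of the \emph{left} child of $u$ in $T$, so $x$ is associated with a \emph{strict descendant} of $u$ on the left-subtree side; symmetrically $v$ is associated with a strict descendant of $u$ on the right-subtree side. By Observation~\ref{obs:once_per_level}, each of $T^R$, $T^B$ contributes at most one associated node per level of $T$, so the number of nodes of $T^R$ that could possibly appear as the $x$-component of a charged pair is at most the number of levels of $T$ strictly below $u$ that contain a node of $T^R$ associated within $u$'s left subtree — and this is bounded by $h_R(u)$, the height of the highest such node. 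Symmetrically the number of candidate $v$-components is $O(h_B(u))$.

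The heart of the argument, though, is to show that not all $h_R(u) \cdot h_B(u)$ combinations can actually be \emph{linked} pairs charged to $u$: I will argue that for each candidate $x \in T^R$ there is in fact at most one candidate $v \in T^B$ that forms a charged link with it, which would give only $O(h_R(u))$ pairs — but re-reading the statement, the claimed bound is the product $O(h_R(u)\cdot h_B(u))$, so the cruder counting above already suffices and no such refinement is needed. Concretely: each of the $O(h_R(u))$ nodes $x$ of $T^R$ associated with descendants of $u$ can be linked to at most $O(\log m)$ nodes of $T^B$ in the $1$D linking scheme — but among those, only the ones associated with $u$'s right-subtree side produce a spanning interval inside $I_u$ cut by $s$, and there are at most $h_B(u)$ of those by the same per-level argument. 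Multiplying, the number of charged spanning intervals is $O(h_R(u) \cdot h_B(u))$.

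So the key steps, in order, are: (1) unpack the charging condition $a < s < b$, $[a,b]\subseteq I_u$ into ``$x$ associated strictly below $u$ on the left, $v$ associated strictly below $u$ on the right''; (2) invoke Observation~\ref{obs:once_per_level} to cap the number of such $x$ by $h_R(u)$ and the number of such $v$ by $h_B(u)$; (3) bound the charged links by the product of these counts. The main obstacle is step~(1): one must be careful that ``the endpoint of $I_x$ lies in $I_u$'' (the definition of association with $u$) together with ``$[a,b] \subseteq I_w$ and $[a,b]$ cut by the splitting coordinate of $w$'' (the definition of charging to $w=u$) really does force $x$'s endpoint to lie on the correct side of $s$ and within a proper descendant of $u$, rather than, say, at $u$ itself or in the sibling subtree. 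This hinges on the convention that points of $R\cup B$ are interior to the associated intervals of $T$, and on the fact that the endpoint $a$ of $I_x$ for $x \in T^R$ is the right endpoint of a leftward-infinite interval, so $a$ coincides with (a point just right of) some point of $R$, hence is interior to $I_u$ and strictly between the left end of $I_u$ and $s$. Once that geometric bookkeeping is pinned down, the counting is immediate.
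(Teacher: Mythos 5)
There is a genuine gap in the counting step, which is the heart of the lemma. You correctly reduce the problem to showing that at most $h_R(u)$ nodes of $T^R$ and at most $h_B(u)$ nodes of $T^B$ can participate in a spanning interval charged to $u$, but your justification of these two counts misreads Observation~\ref{obs:once_per_level}. That observation says that a \emph{fixed} node of $T^R$ is associated with at most one node of $T$ per level of $T$ (its endpoint lies in exactly one associated interval per level); it does \emph{not} say that $T^R$ ``contributes at most one associated node per level of $T$.'' The converse count can be linear: if $u$ is high in $T$, then $\Theta(n)$ nodes of $T^R$ have their right endpoint inside $I_u$ and hence are associated with $u$ and with its descendants. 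So ``number of levels of $T$ below $u$'' does not bound the number of candidate left-nodes, and in any case that quantity has no a priori relation to $h_R(u)$, which is a height in $T^R$, not in $T$.

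The missing idea is to use the \emph{linking} condition, not just the association/charging conditions. The paper shows that if a pair $r \in T^R$, $b \in T^B$ is linked and its spanning interval is charged to $u$ with splitting coordinate $x$, then $r \in C(T^R,(-\infty,x])$ and $b \in C(T^B,[x,\infty))$: indeed, $r \in C(T^R,(-\infty,x'])$ for $x'$ the left endpoint of $I_b$ by the definition of a link, and if $r$ were not also canonical for the query at $x$, its right endpoint would lie strictly between $x$ and $x'$, so the spanning interval would not contain $x$ and could not be charged to $u$. The canonical cover of a one-sided query contains at most one node per height of $T^R$, and every charged $r$ has its endpoint in $I_u$, hence height at most $h_R(u)$; this is what yields at most $h_R(u)$ candidates on the red side (and symmetrically $h_B(u)$ on the blue side), after which the product bound is immediate. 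Your proposal never extracts this ``one node per height'' property from the link definition, so the bound $O(h_R(u)\cdot h_B(u))$ is not established. (The digression about each $r$ having at most one partner $b$ is a red herring and can be dropped entirely.)
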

\begin{proof}
Let $x$ be the splitting coordinate of $u$ and let $r \in T^R$ and $b \in T^B$ form a spanning interval that is charged to $u$. We claim that, using the notation introduced in Lemma~\ref{lem:matching_pairs}, $r \in C(T^R, (-\infty, x])$ (and symmetrically, $b \in C(T^B, [x, \infty))$). Let $I_b = [x', \infty)$ be the associated interval of $b$, where $x' > x$. By definition, $r \in C(T^R, (-\infty, x'])$. If $r \notin C(T^R, (-\infty, x])$, then the right endpoint of $I_r$ must lie between $x$ and $x'$. But then the spanning interval of $r$ and $b$ would not be charged to $u$. As a result, we can only charge spanning intervals between $h_R(u)$ nodes of $T^R$ and $h_B(u)$ nodes of $T^B$, of which there are at most $O(h_R(u) \cdot h_B(u))$.
\end{proof}
Using Lemma~\ref{lem:num_spanning_intervals_node}, we count the total number of
charged spanning intervals and hence, links between $T^R$ and $T^B$. We refer to
this number as $\numLinks{T^R, T^B}$. This is simply $\sum_{u \in T} O(h_R(u)
\cdot h_B(u)) \leq \sum_{u \in T} O(h_R(u)^2 + h_B(u)^2)$. We can split the sum and assume w.l.o.g. that
$\numLinks{T^R, T^B} \leq 2\sum_{u \in T} O(h_R(u)^2)$. Rewriting the sum based on heights in $T^R$ gives
\[
  \numLinks{T^R, T^B} \leq \sum_{h_R\,=\,0}^{\textit{height}(T^R)} n_T(h_R)
    \cdot O(h_R^2),
\]
where $n_T(h_R)$ is the number of nodes of $T$ that have a node of height $h_R$
associated with it.

To bound $n_T(h)$ we use Observation~\ref{obs:once_per_level} and the fact that
$T^R$ is a \BBalpha{}~tree. Let $c = \frac{1}{1 - \alpha}$, then we get that
$\textit{height}(T^R) \leq \log_c(n)$ from properties of \BBalpha{}~trees.
Therefore, the number of nodes in $T^R$ that have height $h$ is at most
$O(\frac{n}{c^h})$.
%
\begin{lemma}\label{lem:nodes-at-height}
  $n_T(h) = O\left(\frac{(n + m) h}{c^h}\right)$.
\end{lemma}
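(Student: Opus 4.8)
I want to bound $n_T(h)$, the number of nodes of the auxiliary tree $T$ (the perfectly balanced tree of height $\lceil\log(n+m)\rceil$ on $R\cup B$) that have some node of $T^R$ of height exactly $h$ associated with them. The plan is to count associations from the $T^R$ side and redistribute them. There are $O(n/c^h)$ nodes of height $h$ in the \BBalpha{}-tree $T^R$ (as just derived from $\textit{height}(T^R)\le\log_c n$ and the weight-balance property). Each such node $u\in T^R$ is associated with nodes of $T$ whose associated interval contains the (single) endpoint of $I_u$; by Observation~\ref{obs:once_per_level}, $u$ contributes at most one such node per level of $T$, and $T$ has $O(\log(n+m))$ levels. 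So a crude bound is $n_T(h) = O\!\left(\frac{n}{c^h}\log(n+m)\right)$, which is essentially what the lemma claims once we note $\log(n+m)=O(\log(n+m))$ and fold the $h$ factor in — wait, the lemma has an extra factor of $h$, not $\log(n+m)$, so I need to be more careful about which levels of $T$ a height-$h$ node of $T^R$ can actually reach.

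**Refining the level count.**
The key refinement is that a node $u\in T^R$ of height $h$ does not touch all $\Theta(\log(n+m))$ levels of $T$ — it only touches the \emph{deepest} few. The associated interval $I_u=(-\infty,a_u]$ of a height-$h$ node has its endpoint $a_u$ sitting at a place determined by the subtree of $T^R$ rooted at $u$; more to the point, a node $v\in T$ can have $u$ associated with it only if $I_v$ contains $a_u$, and as we go up in $T$ the intervals $I_v$ get large and contain many endpoints. I will argue that the nodes of $T$ to which \emph{any} height-$h$ node of $T^R$ is associated span only the bottom $O(h)$ levels of $T$ together with an additive term. More precisely: consider a node $v\in T$ at level $\ell$ from the bottom; its associated interval $I_v$ contains $\Theta(2^\ell)$ points of $R\cup B$, hence $\Omega(2^\ell / \text{const})$ points of... no — I should instead count from the $T$ side: each node $v$ of $T$ has at most one associated node of $T^R$ of height $h$ (Observation~\ref{obs:once_per_level} again, read the other way — a node of $T$ is associated with at most one node of $T^R$ per level of $T^R$, in particular at most one of height $h$). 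So $n_T(h)$ is at most the number of nodes $v\in T$ that are "hit" at height $h$, and each height-$h$ node $u$ of $T^R$ hits at most $\min(h, \text{height}(T))+O(1)$ levels of $T$: the endpoint $a_u$ of a height-$h$ node is within $O(2^h)$ points of the boundary of the point set along that direction, so it can only lie in the interior of $T$-intervals at the bottom $O(h)$ levels plus the $O(1)$ topmost levels on the relevant side. Summing $O(n/c^h)$ nodes each hitting $O(h)$ levels, each level of $T$ getting at most... hmm, this double-counts — let me just bound $n_T(h) \le \sum_{u : \mathrm{ht}(u)=h} (\#\text{levels of }T\text{ hit by }u) = O(n/c^h)\cdot O(h)$, and separately each level of $T$ receives at most $2^{\text{that level's depth}}$ hits which is dominated; combining gives $n_T(h)=O\!\left(\frac{(n+m)h}{c^h}\right)$ after accounting for the $m$ points of $B$ that also populate $T$ and may force $a_u$ to sit in a $T$-interval at a level as high as $\log(n+m)$ rather than $\log n$ — this is where the $(n+m)$ rather than $n$ enters.

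**The main obstacle.**
The genuinely delicate point, and the one I expect to spend the most care on, is justifying that a height-$h$ node of $T^R$ is associated with only $O(h)$ nodes of $T$ \emph{despite} $T$ having height up to $\log(n+m)$. Observation~\ref{obs:once_per_level} only gives "at most one per level", so without further argument I'd get $O(\log(n+m))$ per node and lose the $h/\log(n+m)$ improvement that makes the final sum $\sum_h n_T(h)\cdot O(h^2)$ telescope to $O((n\log n\log\log n))$ rather than something with an extra log. The resolution is geometric: the endpoint $a_u$ of the interval of a height-$h$ node $u\in T^R$ is, by the weight-balance property, sandwiched between two consecutive points of $R$ that are within $O(c^{\,\text{height}(T^R)-h})$... no — $u$'s subtree has $\Theta(c^h)$ leaves, so $a_u$ lies strictly between the max of $P_u$ and its successor in $R$; a node $v\in T$ has $a_u$ interior to $I_v$ iff $I_v$ contains both of those consecutive $R$-points, and the number of such $v$ (one per level) is bounded by the number of levels at which $I_v$ separates... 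I'll phrase it as: the set of $T$-nodes whose interval has $a_u$ in its interior but whose child's interval doesn't — these form a path in $T$, and its length is $O(\log(\text{number of }R\cup B\text{ points in the smallest such interval}))$. I then bound that smallest interval's size by $O(c^h)$ using weight-balance plus the fact that $T$-intervals barely separate consecutive points, giving $O(h)$ (since $\log c^h = h\log c = O(h)$). I'll need to handle the boundary case where $a_u$ is near the global min/max separately, contributing the additive $O(1)$ absorbed into the $O(h)$, and track the $m$ carefully so the base of the point count is $n+m$. Modulo these bookkeeping details the lemma then follows by multiplying the $O(n/c^h)$ nodes of height $h$ in $T^R$ by the $O(h)$ levels each reaches, re-inserting the symmetric $B$-contribution to land on $O\!\left(\frac{(n+m)h}{c^h}\right)$.
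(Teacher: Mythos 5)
Your plan has a genuine gap at its central step. You claim that a height-$h$ node $u$ of $T^R$ is associated with only $O(h)$ nodes of $T$ (the ``bottom $O(h)$ levels plus $O(1)$ topmost ones''). This is false. The nodes of $T$ associated with $u$ are exactly those whose (nested) intervals contain the endpoint $a_u$ of $I_u$, and these form a full root-to-leaf path in $T$: if $a_u \in I_v$ then $a_u \in I_{\mathrm{parent}(v)}$. So every node of $T^R$, regardless of its height, is associated with $\Theta(\log(n+m))$ nodes of $T$, one per level. Your proposed justification --- that the endpoint of a height-$h$ node lies within $O(2^h)$ points of the boundary of the point set --- is also false: the height-$h$ nodes of $T^R$ partition $R$ into $\Theta(n/c^h)$ consecutive blocks, so their interval endpoints are spread throughout the entire point set, not clustered near its extremes. (A side remark: your ``Observation read the other way'' --- that a node of $T$ has at most one associated node of $T^R$ per height --- is also wrong, e.g.\ the root of $T$ is associated with every node of $T^R$; but this does not matter, since each node of $T$ contributes at most $1$ to $n_T(h)$ by definition.)

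The correct resolution, which is the paper's, is not to improve the per-node level count but to split $T$ by level. Cut $T$ at depth $\log(n/c^h)$. The top part has only $O(n/c^h)$ nodes in total, so even if every one of them has a height-$h$ node associated, the top contributes at most $O(n/c^h)$ to $n_T(h)$ --- here you count nodes of $T$, not associations. Each of the $O(n/c^h)$ height-$h$ nodes of $T^R$ then descends into exactly one bottom subtree, whose height is $\lceil\log(n+m)\rceil - \log(n/c^h) = O(h + m/n)$, and only there do you invoke the one-association-per-level observation. This yields $O(n/c^h) + O\bigl(\tfrac{n}{c^h}(h + m/n)\bigr) = O\bigl(\tfrac{(n+m)h}{c^h}\bigr)$. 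Without this two-regime counting, the honest per-node bound is $O(\log(n+m))$ levels, which would give $n_T(h) = O\bigl(\tfrac{n}{c^h}\log(n+m)\bigr)$ and lose a logarithmic factor in the final sum, exactly the loss you identified at the outset but did not validly circumvent.
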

\begin{proof}
  As argued, there are at most $O(n/c^h)$ nodes in $T^R$ of height $h$. Consider cutting the tree $T$ at level $\log(n/c^h)$. This results in a top tree of size $O(n/c^h)$, and $O(n/c^h)$ bottom trees. Clearly, the top tree contributes at most its size to $n_T(h)$. All bottom trees have height at most $\lceil\log(n + m)\rceil - \log(n/c^h) = O(\log(c^h) + \log(1 + m/n)) = O(h + m/n)$.
  Every node in $T^R$ of height $h$ can, in the worst case, be associated with one distinct node per level in the bottom trees by Observation~\ref{obs:once_per_level}. Hence, the bottom trees contribute at most $O(n (h + m/n) / c^h) = O((n h + m) / c^h) = O((n + m) h / c^h)$ to $n_T(h)$.
\end{proof}

Using this bound on $n_T(h)$ in the sum we previously obtained gives:
\[
  \numLinks{T^R, T^B} \leq \sum_{h_R = 0}^{height(T^R)} O\!\left(\frac{(n + m)
  h_R^3}{c^{h_R}}\right) \leq O(n + m) \sum_{h = 0}^{\infty} \frac{h^3}{c^h} = O(n + m).
\]
Where indeed, $\sum_{h = 0}^{\infty} \frac{h^3}{c^h} = O(1)$ because $c >
1$. Thus, we conclude:

\begin{theorem}\label{thm:link-count-1d}
  The number of links between two $1$-dimensional range trees $T^R$ and $T^B$ containing $n$ and $m$
  points, respectively, is bounded by $O(n + m)$.
\end{theorem}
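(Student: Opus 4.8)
The plan is to combine the two preceding lemmas with a single summation, so the proof itself is short. Recall that we have already reduced counting links to counting the \emph{spanning intervals} they induce, each charged to exactly one node of the auxiliary perfectly balanced tree $T$ built on $R \cup B$. Hence it suffices to bound the total number of spanning intervals charged, summed over all nodes of $T$.

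First I would apply Lemma~\ref{lem:num_spanning_intervals_node}, which bounds the contribution of a node $u$ of $T$ by $O(h_R(u)\cdot h_B(u))$, where $h_R(u)$ and $h_B(u)$ are the heights of the highest nodes of $T^R$, resp.\ $T^B$, associated with $u$. Using $h_R(u)\cdot h_B(u) \le h_R(u)^2 + h_B(u)^2$ and splitting the sum, we may assume without loss of generality that $\numLinks{T^R,T^B} \le 2\sum_{u \in T} O(h_R(u)^2)$. I would then regroup this sum by the height $h$ of the associated $T^R$-node: writing $n_T(h)$ for the number of nodes of $T$ that have a node of $T^R$ of height $h$ associated with them, the sum becomes $\sum_{h=0}^{\textit{height}(T^R)} n_T(h) \cdot O(h^2)$.

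The final step is to substitute the bound $n_T(h) = O((n+m)h/c^h)$ from Lemma~\ref{lem:nodes-at-height}, where $c = 1/(1-\alpha) > 1$; this is the only place the \BBalpha{}~property is needed, as it limits the number of height-$h$ nodes of $T^R$ to $O(n/c^h)$. This gives
\[
  \numLinks{T^R,T^B} \;\le\; \sum_{h=0}^{\infty} O\!\left(\frac{(n+m)h^3}{c^h}\right)
  \;=\; O(n+m)\sum_{h=0}^{\infty}\frac{h^3}{c^h}\;=\;O(n+m),
\]
since $\sum_{h} h^3/c^h$ converges for every $c>1$. This establishes the theorem.

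The main obstacle is not this final computation, which is a routine geometric series, but the two ingredients it rests on. The first is the charging scheme together with Lemma~\ref{lem:num_spanning_intervals_node}: using the auxiliary tree $T$ as a ``referee'' for the links, and then arguing that a spanning interval charged to a node with splitting coordinate $x$ can only involve $T^R$-nodes lying in the canonical decomposition $C(T^R,(-\infty,x])$, so at most $h_R(u)$ of them participate (and symmetrically on the blue side). The second is the estimate for $n_T(h)$ in Lemma~\ref{lem:nodes-at-height}, where one must cut $T$ at level $\log(n/c^h)$ so that the top part has size $O(n/c^h)$ and the bottom parts have height only $O(h + m/n)$, and then use Observation~\ref{obs:once_per_level} to charge each height-$h$ node of $T^R$ at most one node per level of a bottom tree. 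The point that makes the whole bound collapse to linear is that the exponential decay $c^{-h}$ overwhelms the polynomial factors $h^3$ accumulated along the way.
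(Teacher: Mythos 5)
Your proposal is correct and follows the paper's own proof essentially verbatim: split $h_R(u)\cdot h_B(u)$ into $h_R(u)^2 + h_B(u)^2$, regroup by height, substitute the bound on $n_T(h)$ from Lemma~\ref{lem:nodes-at-height}, and observe the resulting series $\sum_h h^3/c^h$ converges. There is nothing to flag.
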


\subsection{Extending to Higher Dimensions}
We now extend the bound to $d$ dimensions. The idea is very simple. We first determine the links for the top-layer of the range trees. This results in links between associated range trees of $d-1$ dimensions (see Fig.~\ref{fig:trees-2d}). We then determine the links within the linked associated trees, which number can be bounded by induction on $d$.

\begin{theorem}\label{thm:link-count-nd}
  The number of links between two $d$-dimensional range trees $T^R$ and $T^B$ containing $n$ and $m$ ($n \geq m$)
  points, respectively, is bounded by $O(n (\log n \log\log n)^{d-1})$.
\end{theorem}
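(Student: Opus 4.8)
The plan is to prove Theorem~\ref{thm:link-count-nd} by induction on $d$, using Theorem~\ref{thm:link-count-1d} as the base case $d=1$ (where indeed $O(n+m) = O(n) = O(n(\log n\log\log n)^0)$ since $n\geq m$). For the inductive step, I would split every link between $T^R$ and $T^B$ according to where its two endpoint-nodes sit in the tree hierarchy. A link is formed by a chain of node-pairs, one pair per layer (as in the repeated application of Lemma~\ref{lem:matching_pairs} in Lemma~\ref{lem:detect_intersecting_squares}): a pair $(u,u')$ on the top layer with $u\in T^R$, $u'\in T^B$, then a pair of nodes inside the associated $(d-1)$-dimensional trees $T_u$ and $T_{u'}$, recursively. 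So the total link count is $\sum_{(u,u')} \numLinks{T_u, T_{u'}}$, where the sum is over all top-layer node-pairs $(u,u')$ that are themselves linked in the $1$-dimensional sense (i.e.\ $u$ represents a canonical subset for $u'$ and vice versa).

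The key step is to bound this sum. By the inductive hypothesis, $\numLinks{T_u,T_{u'}} = O(|P_u|\,(\log|P_u|\,\log\log|P_u|)^{d-2})$ whenever $|P_u|\geq|P_{u'}|$ (and symmetrically otherwise); bounding both sizes by $n$ gives $\numLinks{T_u,T_{u'}} = O(\min(|P_u|,|P_{u'}|)\,(\log n\,\log\log n)^{d-2})$. So it suffices to show $\sum_{(u,u')} \min(|P_u|,|P_{u'}|) = O(n\log n\log\log n)$, where the sum ranges over the $1$D-linked top-layer pairs. This is exactly the kind of weighted count that the $1$D analysis of Section~4.1 was set up to handle: there, every link was charged to a node $w$ of the auxiliary perfectly-balanced tree $T$ on $R\cup B$ via its spanning interval, and Lemma~\ref{lem:num_spanning_intervals_node} bounds the number of links charged to $w$ by $O(h_R(w)\cdot h_B(w))$. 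I would refine that lemma to also record, for each charged link $(u,u')$, a bound on $\min(|P_u|,|P_{u'}|)$: if $u\in T^R$ is the $i$-th highest $T^R$-node associated with $w$ (so $1\leq i\leq h_R(w)$), its canonical subset has size $O(n/c^{\,\mathrm{height}(T^R)-i})$ roughly — more usefully, among the $h_R(w)$ nodes of $T^R$ associated with $w$, their sizes form (up to constants, by the \BBalpha{} height bound and Observation~\ref{obs:once_per_level}, one per level of a bottom subtree) a geometric-type sequence summing to $O(|P_w|)$, i.e.\ $O((n+m)h/c^h)$-type quantities as in Lemma~\ref{lem:nodes-at-height}. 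Summing $\min(|P_u|,|P_{u'}|)$ over the $\leq h_R(w)h_B(w)$ pairs charged to $w$, then over all $w\in T$, reorganized by the height $h_R$ of the $T^R$-endpoint as in Section~4.1, yields $\sum_{h_R} n_T(h_R)\cdot O(h_R)\cdot O((n+m)/c^{h_R})$-type sums; plugging in $n_T(h) = O((n+m)h/c^h)$ from Lemma~\ref{lem:nodes-at-height} gives $O((n+m))\sum_h \mathrm{poly}(h)/c^h \cdot (\text{extra } \log\text{-ish factor})$. Getting a clean $O(n\log n\log\log n)$ rather than just $O(n)$ out of this — i.e.\ seeing where the single $\log n\log\log n$ factor per dimension comes from — is the crux, and I expect it comes from the $d-1$ layers each contributing a $\min$-size that is not fully geometrically decaying: the outermost layer's $n$D point can be spread across $\Theta(\log n)$ associated subtrees whose sizes are not dominated by a single term, and a second $\log\log n$ factor appears because the bottom subtrees in the Lemma~\ref{lem:nodes-at-height} decomposition have height $O(h+m/n)$, so a node of height $h$ in $T^R$ is associated with $\Theta(h)$ nodes of $T$, and these $h$ associations, when combined with the $(d-1)$-fold recursion, telescope into $(\log n\log\log n)^{d-1}$ rather than $(\log n)^{d-1}$.

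I would organize the write-up as follows. First, set up the layered decomposition of links and the sum $\sum_{(u,u')}\numLinks{T_u,T_{u'}}$. Second, state and prove the refined version of Lemma~\ref{lem:num_spanning_intervals_node}: the links charged to $w\in T$ whose $T^R$-endpoint has height $h$ satisfy $\sum \min(|P_u|,|P_{u'}|) = O(h\cdot |P_w'|)$ for an appropriate "local weight" $|P_w'|$, using the geometric decay of \BBalpha{}-subtree sizes along a root-to-$w$ descent. Third, sum over $w$ grouped by $h$ via Lemma~\ref{lem:nodes-at-height} to get $\sum_{(u,u')}\min(|P_u|,|P_{u'}|) = O((n+m)\log n\log\log n) = O(n\log n\log\log n)$. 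Fourth, close the induction: the total is $O(n\log n\log\log n)\cdot O((\log n\log\log n)^{d-2}) = O(n(\log n\log\log n)^{d-1})$. The main obstacle, as noted, is the bookkeeping in the refined charging lemma — correctly pairing each $T^R$-node of height $h$ with its $\Theta(h)$ companion $T$-nodes and its size, and verifying that the per-layer contributions genuinely multiply to $(\log n\log\log n)^{d-1}$ and not something larger; the geometric series $\sum h^k/c^h = O(1)$ absorb all the polynomial-in-height slack, so the only surviving growth is the explicitly tracked $\log n\log\log n$ per recursion level.
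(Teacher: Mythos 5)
Your overall frame (induction on $d$, decomposing the total as $\sum_{(u,u')}\numLinks{T_u,T_{u'}}$ over linked top-layer pairs) matches the paper, but the pivotal step is wrong: the inductive hypothesis does \emph{not} give $\numLinks{T_u,T_{u'}} = O(\min(|P_u|,|P_{u'}|)(\log n\log\log n)^{d-2})$. The theorem bounds the link count in terms of the size of the \emph{larger} tree, so induction only yields a bound in terms of $\max(|P_u|,|P_{u'}|)$; writing ``bounding both sizes by $n$'' does not convert a max into a min. With only the max-based bound, a top-layer node $u$ with $|P_u|=k$ can be linked to $\Theta(\log n)$ partners (one per level of the other top layer), each pair costing $O(k(\log n\log\log n)^{d-2})$, and summing over all $u$ gives $O(n\log^2 n\,(\log n\log\log n)^{d-2})$ --- too large by roughly a $\log n/\log\log n$ factor. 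Conversely, if a min-based hypothesis were available, your own computation would close the induction with the strictly stronger bound $O(n\log^{d-1}n\,(\log\log n)^{d-2})$ (indeed $\sum_{(u,u')}\min(|P_u|,|P_{u'}|)$ is already $O(n\log n)$ by geometric decay of partner sizes, no refined charging needed), which should have been a warning sign. The refined 1D charging machinery you propose is therefore aimed at the wrong quantity and does not repair the gap.

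The paper closes exactly this gap by strengthening the induction to carry \emph{two} bounds simultaneously: the refined bound $O(k(\log n\log\log n)^{d-2})$, in terms of the larger tree, and the trivial bound $O(m_i\log^{2d-3}n)$, in terms of the smaller tree. For a fixed associated tree $T_u$ of size $k$, the partner sizes satisfy $m_i\le k/c^i$; the refined bound is applied to the first $i^*=\log_c(\log^{d-1}n)=O(\log\log n)$ partners, and the trivial bound to the rest, where $m_i\log^{2d-3}n=O(k\log^{d-2}n)$ and the terms sum geometrically. This crossover at $i^*$ is precisely where the single extra $\log\log n$ factor per dimension comes from --- not, as you conjecture, from the $O(h+m/n)$ height of the bottom subtrees in Lemma~\ref{lem:nodes-at-height}. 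To salvage your write-up you would need to add the trivial bound to the inductive hypothesis and carry out this two-regime split.
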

\begin{proof}
We show by induction on $d$ that the number of links is bounded by the minimum of $O(n (\log n \log\log n)^{d-1})$ and $O(m \log^{2 d - 1} n)$. The second bound is simply the trivial bound given at the start of Section~\ref{sec:Link_Relations}. The base case for $d=1$ is provided by Theorem~\ref{thm:link-count-1d}. Now consider the case for $d > 1$. We first determine the links for the top-layer of $T^R$ and $T^B$. Now consider the links between an associated tree $T_u$ in $T^R$ containing $k$ points and other associated trees $T_0, \ldots, T_r$ that contain at most $k$ points. Since $T_u$ can be linked with only one associated tree per level, and because both range trees use \BBalpha{}~trees, the number of points $m_0, \ldots, m_r$ in $T_0, \ldots, T_r$ satisfy $m_i \leq k/c^i$ ($0 \leq i \leq r$) where $c = \frac{1}{1 - \alpha}$. By induction, the number of links between $T_u$ and $T_i$ is bounded by the minimum of $O(k (\log n \log\log n)^{d-2})$ and $O(m_i \log^{2d - 3} n)$. Now let $i^* = \log_c(\log^{d-1} n) = O(\log\log n)$. Then, for $i \geq i^*$, we get that $O(m_i \log^{2d - 3} n) = O(k \log^{d-2} n)$. Since the sizes of the associated trees decrease geometrically, the total number of links between $T_u$ and $T_i$ for $i \geq i^*$ is bounded by $O(k \log^{d-2} n)$. The links with the remaining trees can be bounded by $O(k \log^{d-2} n (\log\log n)^{d-1})$. Finally note that the top-layer of each range tree has $O(\log n)$ levels, and that each level contains $n$ points in total. Thus, we obtain $O(n \log^{d-1} n (\log\log n)^{d-1})$ links in total. The remaining links for which the associated tree in $T^B$ is larger than in $T^R$ can be bounded in the same way. 
\end{proof}

%

\begin{figure}[tb]
  \centering
  \includegraphics{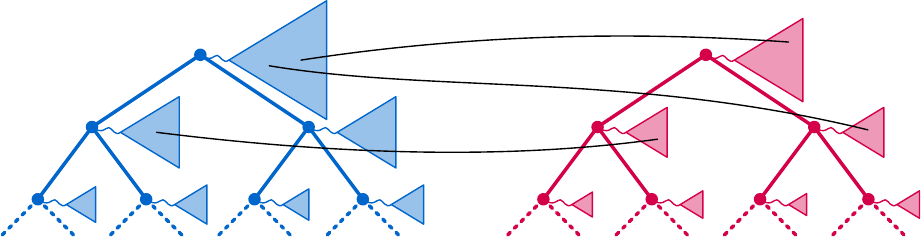}
  \caption{Two layered trees with two layers, and the links between them
    (sketched in black). We are interested in bounding the number of such links.}
  \label{fig:trees-2d}
\end{figure}

\noindent It follows from Theorem~\ref{thm:link-count-nd} that our data structure from
Section~\ref{sec:DataStructure} actually maintains only $O(n (\log n \log\log n)^2)$
certificates. This directly implies that the space usage is only $O(n (\log n \log\log n)^2)$
as well.

\section{Conclusion and Future Work}
\label{sec:Concluding_Remarks}

We presented an efficient fully dynamic data structure for maintaining a set of
disjoint growing squares. This leads to an efficient algorithm for
agglomerative glyph clustering. The main future challenge is to improve the
analysis of the running time. Our analysis from
Section~\ref{sec:Link_Relations} shows that at any time, we need only few
linking certificates. However, we would like to bound the total number of
linking certificates used throughout the entire sequence of operations. An
interesting question is if we can extend our argument to this case. This may
also lead to a more efficient algorithm for maintaining the linking
certificates during updates.



\bibliography{../agglomerative-clustering-growing-squares}


\end{document}